\newtheorem{theorem}{Theorem}
\newtheorem{lemma}{Lemma}
\newtheorem{definition}{Definition}
\newtheorem{proof}{Proof}
\newcolumntype{P}[1]{>{\centering\arraybackslash}p{#1}} 
\def\dosth#1{\ifx###1##\else\dofirst#1\anytoken\fi}
\def\doagain#1\anytoken{\dosth{#1}}
\def\payoffpairs#1#2#3{\m=#1\multiply\m by 4 \advance\m by -1 \n=1
  \def\dofirst##1{\put(\n,-\m){\makebox(0,0){\strut##1}}\advance\n by 4 \doagain}%
  \dosth{#2\strut}%
  \m=#1\multiply\m by 4 \advance\m by -3 \n=3 \dosth{#3\strut}}
\def\singlepayoffs#1#2{\m=#1\multiply\m by 4 \advance\m by -2 \n=2
  \def\dofirst##1{\put(\n,-\m){\makebox(0,0){\strut##1}}\advance\n by 4 \doagain}%
  {\large\dosth{#2\strut}}}
\newcommand{\bimatrixgame}[8]{%
\setlength{\unitlength}{#1}%
\newcount\rows
\newcount\cols
\rows=#2
\cols=#3
\newcount\rowcoord
\newcount\colcoord
\rowcoord=\rows
\colcoord=\cols
\multiply\rowcoord by 4
\multiply\colcoord by 4
\newcount\m
\newcount\n
\m=\rowcoord
\n=\colcoord
\advance\m by 2 
\advance\n by 2 
\begin{picture}(\n,\m)(-2,-\rowcoord)
\m=\rows
\n=\cols
\advance\m by 1
\advance\n by 1 
\thinlines
\multiput(0,0)(0,-4){\m}{\line(1,0){\colcoord}}
\multiput(0,0)(4,0){\n}{\line(0,-1){\rowcoord}}
\put(0,0){\line(-1,1){2}}
\put(-1.5,0.5){\makebox(0,0)[r]{#4}}  
\put(-.7,1.7){\makebox(0,0)[l]{#5}}   
\n=2
\def\dofirst##1{\put(-0.8,-\n){\makebox(0,0)[r]{\strut##1}}\advance\n by 4
   \doagain}
\dosth{#6\strut} 
\n=2
\def\dofirst##1{\put(\n,1.0){\makebox(0,0){\strut##1}}\advance\n by 4
   \doagain}
\dosth{#7\strut}#8%
\end{picture}}
\newcounter{probc}
\newcommand{\probc}[1]{\refstepcounter{probc}\label{#1}}
\newcommand{\PPAD}{\ensuremath{\mathtt{PPAD}}\xspace}
\newcommand{\NP}{\ensuremath{\mathtt{NP}}\xspace}
\newcommand{\ETR}{\ensuremath{\mathtt{ETR}}\xspace}
\newcommand{\EOL}{\textsc{EndOfTheLine}\xspace}
\newcommand{\eps}{\ensuremath{\epsilon}\xspace}
\newcommand{\supp}{\mathrm{supp}}
\newcommand{\freegame}{\ensuremath{\textsc{FreeGame}_\delta}\xspace}
\newcommand{\mone}{\ensuremath{\mathrm{Merlin}_1}\xspace}
\newcommand{\mtwo}{\ensuremath{\mathrm{Merlin}_2}\xspace}
\newcommand{\meye}{\ensuremath{\mathrm{Merlin}_i}\xspace}
\newcommand{\art}{\ensuremath{\mathrm{Arthur}}\xspace}
\newcommand{\xbf}{\ensuremath{\mathbf{x}}\xspace}
\newcommand{\ybf}{\ensuremath{\mathbf{y}}\xspace}
\newcommand{\ubf}{\ensuremath{\mathbf{u}}\xspace}
\newcommand{\dbf}{\ensuremath{\mathbf{d}}\xspace}
\newcommand{\gcal}{\ensuremath{\mathcal{G}}\xspace}
\newcommand{\dcal}{\ensuremath{\mathcal{D}}\xspace}
\newcommand{\tcal}{\ensuremath{\mathcal{T}}\xspace}
\newcommand{\fcal}{\ensuremath{\mathcal{F}}\xspace}
\newcommand{\ucal}{\ensuremath{\mathcal{U}}\xspace}
\newcommand{\gcalp}{\ensuremath{\mathcal{G}'}\xspace}
\newcommand{\gcalpp}{\ensuremath{\mathcal{G}''}\xspace}
\newcommand{\ifrak}{\ensuremath{\mathfrak{i}}\xspace}
\newcommand{\jfrak}{\ensuremath{\mathfrak{j}}\xspace}
\newcommand{\ifrakp}{\ensuremath{\mathfrak{i}'}\xspace}
\newcommand{\jfrakp}{\ensuremath{\mathfrak{j}'}\xspace}
\newcommand{\ufrak}{\ensuremath{\mathfrak{u}}\xspace}
\newcommand{\xbfi}{\ensuremath{\xbf_\ifrak}\xspace}
\newcommand{\ybfj}{\ensuremath{\ybf_\jfrak}\xspace}
\newcommand{\eeq}{\ensuremath{E^{\eps}}\xspace}
\newcommand{\txy}{\ensuremath{\mathcal{T}_{(\xbf, \ybf)}}\xspace}
\DeclareMathOperator{\sw}{SW\xspace}
\DeclareMathOperator{\bsw}{BSW\xspace}
\DeclareMathOperator{\sat}{SAT\xspace}
\DeclareMathOperator{\polylog}{polylog}
\begin{document}

\begin{frontmatter}

\title{Inapproximability Results for Approximate Nash Equilibria}
\tnotetext[mytitlenote]{The authors were supported by EPSRC grant EP/L011018/1. 
A short version of this paper appeared at WINE 2016.}


\author[affil1]{Argyrios Deligkas}
\author[affil2]{John Fearnley}
\author[affil2]{Rahul Savani}

%
\address[affil1]{Technion-Israel Institute of Technology}
\address[affil2]{University of Liverpool}
\begin{abstract}
We study the problem of finding approximate Nash equilibria that satisfy certain
conditions, such as providing good social welfare. In particular, we study the
problem $\eps$-NE $\delta$-SW: find an $\eps$-approximate
Nash equilibrium ($\eps$-NE) that is within $\delta$ of the best social welfare
achievable by an $\eps$-NE. Our main result is that, if the
exponential-time hypothesis (ETH) is true, then solving
$\left(\frac{1}{8} - \mathrm{O}(\delta)\right)$-NE $\mathrm{O}(\delta)$-SW for an $n\times n$
bimatrix game requires $n^{\mathrm{\widetilde \Omega}(\log n)}$ time.
Building on this result, we show similar conditional running time lower bounds
on a number of decision problems for approximate Nash equilibria that do not
involve social welfare, including maximizing or minimizing a certain player's
payoff, or finding approximate equilibria contained in a given pair of supports.
We show quasi-polynomial lower bounds for these problems assuming that ETH
holds, where these lower bounds
apply to $\eps$-Nash
equilibria for all $\eps < \frac{1}{8}$. The hardness
of these other decision problems has so far only been studied in the context of
exact equilibria.

%
%
\end{abstract}

\begin{keyword}
Approximate Nash equilibrium, constrained equilibrium, quasi-polynomial time, lower bound, Exponential Time Hypothesis.
\end{keyword}

\end{frontmatter}

\section{Introduction}
\label{sec:intro}

One of the most fundamental problems in game theory is to find a Nash
equilibrium of a game. Often, we are not interested in finding any Nash
equilibrium, but instead we want to find one that also satisfies
certain constraints. For example, we may want to find a Nash equilibrium that
provides high \emph{social welfare}, which is the sum of the players' payoffs.

In this paper we study such problems for \emph{bimatrix games}, which are
two-player strategic-form games. Unfortunately, for bimatrix games, it is known
that these problems are hard. Finding any Nash equilibrium of a bimatrix
game is \PPAD-complete~\cite{DGP09}, while finding a constrained Nash
equilibrium turns out to be even harder. Gilboa and Zemel~\cite{GZ89} studied
several decision problems related to Nash equilibria.  They proved that it is
\NP-complete to decide whether there exist Nash equilibria in  bimatrix games
with some ``desirable'' properties, such as high social welfare. Conitzer and
Sandholm~\cite{CS08} extended the list of \NP-complete problems of~\cite{GZ89}
and furthermore proved inapproximability results for some of them. Recently,
Garg et al.~\cite{GMVY} and Bilo and Mavronicolas~\cite{BM16,BM17} extended these
results to many player games and provided \ETR-completeness results for them.

\noindent \textbf{Approximate equilibria.}
Due to the apparent hardness of finding exact Nash equilibria, focus has 
shifted to \emph{approximate} equilibria. There are two natural notions of
approximate equilibrium, both of which will be studied in this paper. An
\emph{$\epsilon$-approximate Nash equilibrium} ($\epsilon$-NE) requires that
each player has an expected payoff that is within $\epsilon$ of their best
response payoff. An \emph{$\epsilon$-well-supported Nash equilibrium}
($\epsilon$-WSNE) requires that both players only play strategies whose payoff
is within $\epsilon$ of the best response payoff. Every $\epsilon$-WSNE is an
$\epsilon$-NE but the converse does not hold, so a WSNE is a more restrictive
notion.

There has been a long line of work on finding approximate equilibria
\cite{BBM10,CDFFJS,DMP07,DMP09,FGSS12,KS10,TS08}.
Since we use an additive notion of approximation, it is
common to rescale the game so that the payoffs lie in $[0, 1]$, which allows
different algorithms to be compared. 
The state of the art for polynomial-time algorithms is the following.
There is a polynomial-time algorithm that computes an $0.3393$-NE~\cite{TS08},
and a polynomial-time algorithm that computes a $0.6528$-WSNE~\cite{CDFFJS}.
 
There is also a \emph{quasi-polynomial time approximation
scheme} (QPTAS) for finding approximate Nash equilibria. The algorithm of
Lipton, Markakis, and Mehta finds an $\epsilon$-NE in $n^{O(\frac{\log
n}{\epsilon^2})}$ time~\cite{LMM03}. 
They proved that there is always an
$\epsilon$-NE with support of logarithmic size, and then they use a brute-force search over
all possible candidates to find one.
We will refer to their algorithm as the LMM algorithm. 
 
A recent breakthrough of Rubinstein implies that we cannot do better than a
QPTAS like the LMM algorithm~\cite{R16}: assuming an exponential time hypothesis for \PPAD (PETH),
there is a small constant, $\eps^*$, such that for $\eps < \eps^*$, every
algorithm for finding an \eps-NE requires quasi-polynomial time. Briefly, PETH
is the conjecture that \EOL, the canonical \PPAD-complete problem,
cannot be solved faster than exponential time.


\noindent \textbf{Constrained approximate Nash equilibria.}
While deciding whether a game has an exact Nash equilibrium that satisfies
certain constraints is \NP-hard for most interesting constraints, this is not
the case for approximate equilibria, because the LMM algorithm can be
adapted to provide a QPTAS for them. 
The question then arises whether one can do better.
 
Let the problem $\eps$-NE $\delta$-SW be the problem
of finding an $\epsilon$-NE whose social welfare is within $\delta$ of the best
social welfare that can be achieved by an $\eps$-NE.
Hazan and Krauthgamer~\cite{HK11} and Austrin,
Braverman and Chlamtac~\cite{ABC13} proved that there is a small but constant
$\eps$ such that $\eps$-NE $\eps$-SW is at least as hard as finding a hidden
clique of size $\mathrm{O}(\log n)$ in the random graph $G_{n, 1/2}$. 
This was further strengthened by Braverman, 
Ko, and Weinstein~\cite{BKW15} who showed a lower bound based on the
\emph{exponential-time hypothesis} (ETH), which is the conjecture that any
deterministic algorithm for 3SAT
requires $2^{\mathrm{\Omega}(n)}$ time. More precisely, they showed
%
%
that under ETH there is a small constant \eps
such that any algorithm for $\mathrm{O}(\eps)$-NE $\mathrm{O}(\eps)$-SW%
\footnote{While the proof in~\cite{BKW15} produces a lower bound for $0.8$-NE
$(1 - \mathrm{O}(\eps))$-SW, this is in a game with maximum payoff $\mathrm{O}(1/\eps)$.
Therefore, when the
payoffs in this game are rescaled to $[0, 1]$, the resulting lower bound only
applies to $\eps$-NE $\eps$-SW.}
requires $n^{\text{poly}(\eps) \log(n)^{1 - o(1)}}$ time%
\footnote{Although the paper claims
that they obtain a $n^{\mathrm{\widetilde O}(\log n)}$ lower bound, the proof reduces
from the \emph{low error} result from~\cite{AIM14} (cf. Theorem 36
in~\cite{AIM14arx}), which gives only the weaker lower bound of 
$n^{\text{poly}(\eps) \log(n)^{1 - o(1)}}$.}. We shall refer to this as the BKW
result.
 
It is worth noting that the Rubinstein's hardness result~\cite{R16} almost makes
this result redundant. If one is willing to accept that PETH is true, which is a
stronger conjecture than ETH, 
then Rubinstein's result says that for small
$\eps$ we require quasi-polynomial time to find \emph{any} $\eps$-NE, which
obviously implies that the same lower bound applies to $\eps$-NE $\delta$-SW for any
$\delta$.


\noindent \textbf{Our results.}
Our first result is a lower bound for the problem of finding $\eps$-NE
$\delta$-SW. 
We show that, assuming ETH, that there exists a small constant $\delta$ such
that the problem $\left(\frac{1 - 4 g  \cdot
\delta}{8}\right)$-NE $\left(\frac{g \cdot \delta}{4}\right)$-SW requires
$n^{\mathrm{\widetilde\Omega}( \log n)}$ time%
\footnote{Here $\mathrm{\widetilde \Omega}(\log n)$ means $\mathrm{\Omega}(\frac{\log n}{(\log
\log n)^c})$ for some constant $c$.},
where $g = \frac{1}{138}$.

To understand this result, let us compare it to the BKW result. First, observe
that as $\delta$ gets smaller, the $\eps$ in our $\eps$-NE gets larger, whereas
in the BKW result, $\eps$ get smaller. Asymptotically,
our $\eps$ approaches $1/8$. Moreover, since $\delta \le 1$, our lower bound
applies to all $\eps$-NE with $\eps \le \frac{1 - 4g}{8} \approx 0.1214$. This
is orders of magnitude larger than the inapproximability bound given by
Rubinstein's hardness result, and so is not made redundant by that result. In
short, our hardness result is about the hardness of obtaining good social
welfare, rather than the hardness of simply finding an approximate equilibrium.

Secondly, when compared to the BKW result, we obtain a slightly better lower
bound. The exponent in their lower bound is logarithmic only in the limit, while
ours is always logarithmic. 

The second set of results in this paper show that,
once we have our lower bound on the problem of finding $\eps$-NE $\delta$-SW, we
use it to prove lower bounds for other problems regarding constrained
approximate NEs and WSNEs.  Table~\ref{tab:problems} gives a list of the
problems that we consider. For each one, we provide a reduction from $\eps$-NE
$\delta$-SW to that problem.  Ultimately, we prove that if ETH is true, then
for every $\eps < \frac{1}{8}$ finding an $\eps$-NE with the given property
in an $n \times n$ bimatrix game 
requires $n^{\mathrm{\widetilde \Omega}(\log n)}$ time.



\probc{probc:largep}
\probc{probc:restrictedweak}
\probc{probc:difcomp}
\probc{probc:maxprob}
\probc{probc:smalltp}
\probc{probc:smallp}
\probc{probc:totalmaxsupport}
\probc{probc:minmaxsupport}
\probc{probc:maxsupport}
\probc{probc:restricting}

\renewcommand*{\arraystretch}{1.6}
\setlength{\tabcolsep}{7pt}
\begin{table}[htpb!]
\centering
\resizebox{\textwidth}{!}{
\begin{tabular}{|p{0.45\textwidth}|p{0.55\textwidth}|}
\hline
\textbf{Problem description} & \textbf{Problem definition} \\
\hline
\hline
Problem~\ref{probc:largep}: Large payoffs $u \in (0,1]$
&
Is there an \eps-NE $(\xbf, \ybf)$ such that
 $\min(\xbf^TR\ybf,\xbf^TC\ybf) \geq u$?\\
\hline
Problem~\ref{probc:restrictedweak}: Restricted support $S \subset [n]$
&
Is there an \eps-NE $(\xbf, \ybf)$ with $\supp(\xbf) \subseteq S$? \\
\hline
Problem~\ref{probc:difcomp}: 
Two \eps-NE $d \in (0,1]$ apart in Total Variation (TV) distance
&
Are there two \eps-NE with TV distance $\ge d$? \\
\hline
Problem~\ref{probc:maxprob}: Small largest probability $p \in (0,1)$
&
Is there an \eps-NE $(\xbf, \ybf)$ with $\max_i\xbf_i \leq p$?\\
\hline
Problem~\ref{probc:smalltp}: Small total payoff $v \in [0,2)$
&
Is there an \eps-NE $(\xbf, \ybf)$ such that 
$\xbf^TR\ybf + \xbf^TC\ybf \leq v$? \\
\hline
Problem~\ref{probc:smallp}: Small payoff $u \in [0,1)$
&
Is there an \eps-NE $(\xbf, \ybf)$ such that $\xbf^TR\ybf \leq u$?\\
\hline

Problem~\ref{probc:totalmaxsupport}: Large total 
support size $k \in [n]$
&
Is there an \eps-WSNE $(\xbf, \ybf)$ such that 
$|\supp(\xbf)| + |\supp(\ybf)| \geq 2k$?\\
\hline
Problem~\ref{probc:minmaxsupport}: Large smallest
support size 
$k \in [n]$
&
Is there an \eps-WSNE $(\xbf, \ybf)$ such that 
$\min\{|\supp(\xbf)|, |\supp(\ybf)|\} \geq k$?\\
\hline
Problem~\ref{probc:maxsupport}: Large support size $k \in [n]$ 
&
Is there an \eps-WSNE $(\xbf, \ybf)$ such that $|\supp(\xbf)| \geq k$?\\
\hline
Problem~\ref{probc:restricting}: Restricted support
$S_R \subseteq [n]$
&
Is there an \eps-WSNE $(\xbf, \ybf)$ with $S_R \subseteq \supp(\xbf)$? \\
\hline
\end{tabular}
}
\medskip
\caption{The decision problems that we consider. 
All of them take as input a bimatrix game $(R,C)$ and a quality of approximation 
$\eps \in (0,1)$. 
Problems~\ref{probc:largep}~-~\ref{probc:maxprob} relate to \eps-NE, and 
Problems~\ref{probc:totalmaxsupport}~-~\ref{probc:restricting} relate to
\eps-WSNE.
}
\label{tab:problems} 
\end{table}

\paragraph{\bf Techniques}

At a high level, the proof of our first result is similar in spirit 
to the proof of the BKW result. They reduce from the problem of
approximating the value of a \emph{free game}. Aaronson, Impagliazzo, and
Moshkovitz showed quasi-polynomial lower bounds for this problem assuming
ETH~\cite{AIM14}.  
A free game is played between to players named \mone and \mtwo, and a referee
named \art. The BKW result creates a bimatrix game that simulates the free game,
where the two players take the role of \mone and \mtwo, while \art is simulated
using a zero-sum game. 

Our result will also be proved by producing a bimatrix game that simulates a
free game. However, there are a number of key differences that allow us to prove
the stronger lower bounds described above. The first key difference is that we
use a different zero-sum game to simulate \art.
Our zero-sum game is inspired by the one used by Feder, Nazerzadeh, and
Saberi~\cite{FNS07}. The advantage of this construction is that it is capable of
ensuring that play distributions that a very close to uniform in all approximate
Nash equilibria, which in turn gives us a very accurate simulation of \art. 

The downside of this zero-sum game is that it requires $2^n$ rows to force the
column player to mix close to uniformly over $n$ rows. \art is required to pick
two \emph{questions} uniformly from a set of possible questions. The free games
provided by Aaronson, Impagliazzo, and Moshkovitz have question sets of linear
size, so if we reduce directly from these games, we would end up with an
exponentially sized bimatrix game. The conference version of this
paper~\cite{DFS16} resolved the issue by using a sub-sampling lemma, also proved
by Aaronson, Impagliazzo, and Moshkovitz, that produces a free game with
logarithmically sized question sets. This allowed us to produce polynomially
sized bimatrix games, but at the cost of needing randomization to implement the
reduction, and so the result depended on the randomized version of the ETH. 

In this version, we show that we are able to assume only the ETH by using a
stronger result that was discovered by Babichenko, Papadimitriou, and
Rubinstein~\cite{BPR16}. Their results imply that approximating the value of a
free game requires quasipolynomial time even when the size of the question sets
is logarithmic in the game size. They do not explicitly formulate this result,
but it is clearly implied by their techniques. For the sake of completeness, we
provide an exposition of their ideas in Section~\ref{sec:bpr}.

The second main difference between our result and the BKW result is that we use
a different starting point. The BKW result uses the PCP theorem of Moshkovitz
and Raz~\cite{MR10}, which provides a completeness/soundness gap of $1$ vs
$\delta$ for arbitrarily small constant $\delta$ in the label cover problem. The
use of this powerful PCP theorem is necessary, as their proof relies on the
large completeness/soundness gap produced by that theorem. This choice of PCP
theorem directly impacts the running time lower bound that they produce, as the
$(\log n)^{1 - o(1)}$ term in the exponent arises from the blowup of $n^{1 +
o(1)}$ from the PCP theorem.

In contrast to this, our stronger simulation of \art allows us to use the PCP
theorem of Dinur~\cite{Dinur07} as our starting point. This PCP theorem only
involves a blowup of $n \polylog(n)$, which directly leads to the improved
$\widetilde{\Omega}(\log n)$ exponent in our lower bound. The improved blowup of
the PCP theorem comes at the cost
of providing a completeness/soundness gap of only 1 vs $1 - \eps$ for $\eps <
\frac{1}{8}$, but our simulation is strong enough to deal with this. It is also
worth noting that if a PCP theorem with a constant completeness/soundness gap
and linear blow up is devised in the future, then the exponent in our lower bound will improve
to $\Omega(\log n)$.

One final point of comparison is the size of the payoffs used in our simulation.
The zero-sum games that we use have payoffs in the range $(-4, 4)$, which
directly leads to the $\frac{1-4g\cdot \delta}{8}$ bound on the quality of
approximation. In contrast to this, the zero-sum games used by the BKW result
have payoffs of size $O(\frac{1}{\eps})$, which ultimately means that their
lower bound only applies to the problem $\eps$-NE $\eps$-SW.





\paragraph{\bf Other related work}

The only positive result for finding $\eps$-NE with good social welfare that we
are aware of was given by Czumaj,  Fasoulakis,  and Jurdzi\'nski~\cite{CFJ15,CFJ16}.
In~\cite{CFJ15}, they showed that if there is a polynomial-time algorithm for finding an
$\eps$-NE, then for all $\eps' > \eps$ there is also a polynomial-time algorithm
for finding an $\eps'$-NE that is within a constant multiplicative approximation
of the best social welfare.  They also give further results for the case where
$\eps > \frac{1}{2}$. In~\cite{CFJ16} they derived polynomial-time algorithms that 
compute \eps-NE for $\eps \geq \frac{3-\sqrt{5}}{2}$ that
approximate the quality of plutocratic and egalitarian Nash equilibria
to various degrees.

\section{Preliminaries}
\label{sec:pre}

Throughout the paper, we use $[n]$ to denote the set of integers $\{1, 2, \dots,
n\}$. An $n \times n$ bimatrix game is a pair $(R,C)$ of two $n \times n$
matrices: $R$ gives payoffs for the \emph{row} player and $C$ gives the payoffs
for the \emph{column} player. 


Each player has $n$ \emph{pure} strategies. To play the
game, both players simultaneously select a pure strategy: the row player selects
a row $i \in [n]$, and the column player selects a column $j \in [n]$. The row
player then receives payoff $R_{i,j}$, and the column player receives payoff
$C_{i,j}$.

A \emph{mixed strategy} is a probability distribution over $[n]$. We denote a
mixed strategy for the row player as a vector \xbf of length $n$, such that
$\xbf_i$ is the probability that the row player assigns to pure strategy $i$.
A mixed strategy of the column player is a vector \ybf of length $n$, with the
same interpretation. 
If \xbf and~\ybf are mixed strategies for the row and the column player,
respectively, then we call $(\xbf, \ybf)$ a \emph{mixed strategy profile}.
The expected payoff for the row player under strategy profile $(\xbf, \ybf)$
is given by $\xbf^T R \ybf$ and for the column player by $\xbf^T C
\ybf$.  
We denote the \emph{support} of a strategy $\xbf$ as $\supp(\xbf)$,
which gives the set of pure strategies~$i$ such that $\xbf_i > 0$.

\paragraph{\bf Nash equilibria}

Let $\ybf$ be a mixed strategy for the column player. The set of \emph{pure
best responses} against $\ybf$ for the row player is the set of pure
strategies that maximize the payoff against $\ybf$. More formally, a pure
strategy $i \in [n]$ is a best response against $\ybf$ if, for all pure
strategies $i' \in [n]$ we have: $\sum_{j \in \protect [n]} \ybf_j \cdot R_{i,
j} \ge \sum_{j \in \protect [n]} \ybf_j \cdot R_{i', j}$. Column player best
responses are defined analogously.

A mixed strategy profile $(\xbf, \ybf)$ is a \emph{mixed Nash equilibrium}
if every pure strategy in $\supp(\xbf)$ is a best response against~$\ybf$,
and every pure strategy in $\supp(\ybf)$ is a best response against~$\xbf$.
Nash~\cite{N51} showed that every bimatrix game has a mixed Nash equilibrium.
Observe that in a Nash equilibrium, each player's expected payoff is equal to
their best response payoff.

\paragraph{\bf Approximate Equilibria}

There are two commonly studied notions of approximate equilibrium, and we
consider both of them in this paper. The first notion is that of an
\emph{$\epsilon$-approximate Nash equilibrium} ($\epsilon$-NE), which weakens the
requirement that a player's expected payoff should be equal to their best
response payoff. Formally, given a strategy profile $(\xbf, \ybf)$, we
define the \emph{regret} suffered by the row player to be the difference between
the best response payoff and the actual payoff:
$\max_{i\in [n]} \big((R\cdot y)_i\big) - \xbf^T \cdot R \cdot \ybf.$
Regret for the column player is defined analogously. We have that $(\xbf,
\ybf)$ is an $\epsilon$-NE if and only if both players have regret less than
or equal to $\epsilon$.

The other notion is that of an $\epsilon$-approximate-well-supported equilibrium
($\epsilon$-WSNE), which weakens the requirement that players only place
probability on best response strategies. 
We say that a pure strategy $j \in [n]$ of the row player is an
$\epsilon$-best-response against $\ybf$ if:
\begin{equation*}
\max_{i \in [n]}\big((R\cdot y)_i\big) - (R\cdot y)_j \le \epsilon.
\end{equation*}
An $\epsilon$-WSNE requires that both players only place probability on
$\epsilon$-best-responses. Formally,
the row player's \emph{pure strategy regret} under $(\xbf, \ybf)$ is defined
to be:
$\max_{i \in [n]}\big((R\cdot y)_i\big) -
\min_{i \in \supp(\xbf)}\big((R\cdot y)_i\big).$
Pure strategy regret for the column player is defined analogously. A strategy
profile $(\xbf, \ybf)$ is an $\epsilon$-WSNE if both players have pure
strategy regret less than or equal to $\epsilon$.

Since approximate Nash equilibria use an additive notion of approximation, it is
standard practice to rescale the input game so that all payoffs lie in the range
$[0, 1]$, which allows us to compare different results on this topic. For the
most part, we follow this convention. However, for our result in
Section~\ref{sec:bday}, we will construct a game whose payoffs do not lie in
$[0,1]$. In order to simplify the proof, we will prove results about approximate
Nash equilibria in the unscaled game, and then rescale the game to $[0, 1]$ at
the very end. To avoid confusion, we will refer to an $\eps$-approximate Nash
equilibrium in this game as an $\eps$-UNE, to mark that it is an additive
approximation in an unscaled game.

\paragraph{\bf Two-prover games}

A two-prover game is defined as follows.

\begin{definition}[Two-prover game]
A two-prover game $\mathcal{T}$ is defined by a tuple $(X, Y, A, B, \mathcal{D}, V)$
where
$X$ and $Y$ are finite sets of \emph{questions},
$A$ and $B$ are finite sets of \emph{answers}, 
\dcal is a probability distribution defined over $X \times Y$,
and $V$ is a \emph{verification function} of the form $V: X \times Y \times A \times B \rightarrow \{0,1\}$.
\end{definition}

The game is a co-operative and played between two players, who are called \mone
and \mtwo, and an adjudicator called \art. At the start of the game, \art
chooses a question pair $(x,y) \in X \times Y$ randomly according to~$\dcal$. He
then sends $x$ to \mone and $y$ to $\mtwo$. Crucially, \mone does not know the
question sent to \mtwo and vice versa. Having received $x$, \mone then chooses
an answer from $A$ and sends it back to \art. \mtwo similarly picks an answer
from $B$ and returns it to \art. \art then computes $p = V(x, y, a, b)$ and
awards payoff $p$ to both players. The size of the game, denoted $|\mathcal{T}| = | X
\times Y \times A \times B|$ is the total number of entries needed to represent
$V$ as a table.

A \emph{strategy} for \mone is a function $a: X \rightarrow A$ that gives an
answer for every possible question, and likewise a strategy for \mtwo is a
function $b: Y \rightarrow B$. We define $S_i$ to be the set of all strategies
for \meye. The \emph{payoff} of the game under a pair of strategies $(s_1, s_2)
\in S_1 \times S_2$ is denoted as 
\begin{equation*}
p(\mathcal{T}, s_1, s_2) = E_{(x,y) \sim \dcal}[V(x, y, s_1(x), s_2(y))].
\end{equation*}

The \emph{value} of the game, denoted $\omega(\tcal)$, is the maximum expected
payoff to the Merlins when they play optimally: 
\begin{equation*}
\omega(\mathcal{T}) = \max_{s_1 \in S_1} \max_{s_2 \in S_2} p(\mathcal{T}, s_1,
s_2).
\end{equation*}

\paragraph{\bf Free games}

A two-prover game is called a \emph{free game} if the probability 
distribution~\dcal is the
uniform distribution \ucal over $X \times Y$. In particular, this means that
there is no correlation between the question sent to \mone and the question sent
to \mtwo. We are interested in the problem of approximating the value of a free
game within an additive error of $\delta$.
%
\bigskip
\begin{tcolorbox}[title=\freegame]
Input: A free game $\mathcal{T}$ and a constant $\delta > 0$.

\medskip
Output: A value $p$ such that $| \; \omega(\tcal) - p \; | \le \delta$.
\end{tcolorbox}

\section{Hardness of approximating free games}
\label{sec:bpr}

The \emph{exponential time hypothesis (ETH)} is the conjecture that any
deterministic algorithm for solving 3SAT requires $2^{\mathrm{\Omega}(n)}$ time.
Aaronson, Impagliazzo, and Moshkovitz have shown that,
if ETH holds, then there exists a small constant $\eps > 0$ such that
approximating the value of a free game within an additive error of $\eps$
requires quasi-polynomial time. However, their result is not suitable for our
purposes, because it produces a free game in which the question and answer sets
have the same size, and to prove our result, we will require that the question sets have logarithmic size when compared to the answer sets. 

The conference version of this paper~\cite{DFS16} solved this issue by using a
sub-sampling lemma, also proved by Aaronson, Impagliazzo, and Moshkovitz, which
shows that if we randomly choose logarithmically many questions from the
original game, the value of the resulting sub-game is close the value of the
original. However, this comes at the cost of needing randomness in the
reduction, and so our result depended on the truth of the \emph{randomized} ETH,
which is a stronger conjecture.

In this exposition, we will instead use a technique of Babichenko,
Papadimitriou, and Rubinstein~\cite{BPR16}, which allows us to produce a free
game with a logarithmic size question set in a deterministic way. 
The result that we need is a clear consequence of their ideas, but is not
explicitly formulated in their paper. For the sake of completeness, 
in the rest of
this section we provide our own exposition of their ideas.


\paragraph{\bf The PCP theorem}

The starting point of the result will be a 3SAT instance~$\phi$. We say that the
size of a formula~$\phi$ is the number of variables and clauses in the formula.
We define $\sat(\phi) \in [0, 1]$ to be the maximum fraction of clauses that can
be satisfied in $\phi$. The first step is to apply a PCP theorem.

\begin{theorem}[Dinur's PCP Theorem~\cite{Dinur07}]
\label{thm:dinur}
Given any 3SAT instance $\phi$ of size $n$, and a constant $\epsilon$ in the
range $0 < \epsilon < \frac{1}{8}$, we can produce in polynomial time a 3SAT
instance $\psi$ where:
\begin{itemize}
\item The size of $\psi$ is $n \cdot \polylog(n)$.
\item Every clause of $\psi$ contains exactly 3 variables and every variable is
contained in at most $d$ clauses, where $d$ is a constant.
\item If $\sat(\phi) = 1$, then $\sat(\psi) = 1$.
\item If $\sat(\phi) < 1$, then $\sat(\psi) < 1 - \epsilon$.
\end{itemize}
\end{theorem}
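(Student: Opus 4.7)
The theorem is Dinur's PCP theorem from~\cite{Dinur07}, so my plan is to sketch the main line of her gap amplification argument. First I would reduce $\phi$ to a 2-CSP on a constraint graph, preserving size up to constants; a non-satisfying assignment leaves at least one edge constraint violated, so the initial unsatisfiability fraction is $\Omega(1/n)$. I would then preprocess by substituting constant-sized expander gadgets at each vertex, making the constraint graph regular and spectrally expanding without losing more than a constant factor in the gap.

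The iterative core of the proof consists of two operations applied $O(\log n)$ times. The first is Dinur's amplification lemma, which replaces the constraint graph by one whose constraints are indexed by length-$t$ paths and takes unsatisfiability $\delta$ to $\min(2\delta,\alpha)$ for an absolute constant $\alpha$; the size grows by a constant factor per round, but the alphabet size blows up. The second is alphabet reduction via composition with a short assignment tester, which restores a constant-sized alphabet and contributes a $\polylog(n)$ factor to the size. After $O(\log n)$ rounds the gap reaches a constant and the total blowup is $n\cdot\polylog(n)$. Finally, convert the constraint graph back to 3SAT by encoding each constraint with $O(1)$ clauses and enforcing bounded occurrences by joining variable copies through equality constraints on an expander, which preserves the gap up to a constant. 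To push the soundness all the way to $1-\eps$ for a fixed $\eps<\frac{1}{8}$, a constant number of extra amplification rounds suffices, since $\eps$ is a constant.

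The main obstacle is the amplification lemma itself. Its proof hinges on showing that from any assignment to the path-based CSP one can decode a vertex assignment (via a plurality vote among the paths incident to each vertex) whose induced edge-inconsistency rate is at least $\Omega(t\delta)$ up to saturation at $\alpha$. This bound uses the rapid mixing of length-$t$ random walks on the expander to argue that a walk starting from a ``confused'' vertex must encounter a violated constraint with the claimed probability. Getting the constants to line up so that a single round genuinely doubles the gap, together with constructing an efficient low-error assignment tester for the alphabet-reduction step, is the subtle technical core of Dinur's argument, and it is the step I would spend almost all of the write-up on.
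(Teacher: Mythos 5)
The paper does not prove this theorem; it quotes it as a black box from Dinur~\cite{Dinur07}, so there is no in-paper argument to compare against. Your outline of the gap-amplification proof---preprocessing to a regular expanding constraint graph, $O(\log n)$ rounds of graph powering followed by alphabet reduction via composition with a constant-size assignment tester, and a final conversion back to bounded-occurrence E3SAT using expander-based variable replication---is a faithful sketch of Dinur's argument, and the $n\polylog(n)$ size bound falls out exactly as you describe (constant-factor blowup per round, $O(\log n)$ rounds, plus the $\polylog$ factor from composition).

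One claim, however, does not hold up. You assert that pushing the soundness to $1-\epsilon$ for any fixed $\epsilon < \frac{1}{8}$ needs only ``a constant number of extra amplification rounds, since $\epsilon$ is a constant.'' Dinur's amplification lemma maps the unsatisfiability fraction $\delta$ to $\min(2\delta,\alpha)$ and therefore \emph{saturates} at the absolute constant $\alpha$; running more rounds cannot move the gap past $\alpha$, and the $\alpha$ produced by Dinur's argument is nowhere near $\frac{1}{8}$. Obtaining soundness arbitrarily close to $\frac{7}{8}$ for MAX-E3SAT is H{\aa}stad's tight inapproximability theorem, which uses entirely different long-code machinery and does not come with an $n\polylog(n)$ size bound. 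The statement as phrased in the paper (``a constant $\epsilon$ in the range $0 < \epsilon < \frac{1}{8}$'') should really be read as ``there exists a fixed constant $\epsilon>0$ (necessarily below $\frac{1}{8}$) such that\ldots''; the rest of the paper only ever uses the existence of some fixed constant gap, so this imprecision is harmless downstream, but your sketch should not suggest that Dinur's amplification step can be pushed to an arbitrary $\epsilon<\frac{1}{8}$.
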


After applying the PCP theorem given above, we then directly construct a free
game. Observe that a 3SAT formula can be viewed as a bipartite graph in which
the vertices are variables and clauses, and there is an edge between a variable
$x_i$ and a clause $C_j$ if and only if $x_i$ is appears in $C_j$. In
particular, the 3SAT formulas produced by Theorem~\ref{thm:dinur} correspond to 
bipartite graphs with constant degree, since each clause has degree at most 3,
and each variable has degree at most $d$.

The first step is to apply the following lemma, which allows us to partition the
vertices of this bipartite graph. The lemma and proof are essentially
identical to~\cite[Lemma 6]{BPR16}, although we generalise the formulation
slightly, because the original lemma requires that the two sides of the graph
have exactly the same number of nodes and that the graph is $d$-regular.

\begin{lemma}[\cite{BPR16}]
\label{lem:split}
Let $(V, E)$ be a bipartite graph with $|V| = n$, where
$V = U \cup W$ are the two sides of the graph, and where each node has degree at
most $d$. 
Suppose that $U$ and $W$ both have a constant fraction of the vertices, 
and hence $|U| = c_1 \cdot n$ and $|W| = c_2 = (1-c_1) \cdot n$ for some constants 
$c_1 < 1$ and $c_2 < 1$.
We can efficiently  find a partition $S_1, S_2,
\dots, S_{\sqrt n}$ of $U$ and a partition $T_1, T_2, \dots, T_{\sqrt n}$ of $W$
such that
each set has size at most $2 \sqrt n$, and for all $i$ and~$j$ we have
\begin{equation*}
|(S_i \times T_j) \cap E| \le 2 d^2.
\end{equation*}
\end{lemma}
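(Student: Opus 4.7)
The plan is to prove the lemma via the probabilistic method combined with derandomization. I would assign each vertex of $U$ independently and uniformly at random to one of the $\sqrt{n}$ groups $S_1, \ldots, S_{\sqrt{n}}$, and independently do the same for $W$, then show that with positive probability the resulting partition satisfies both the size bound and the edge bound. For the size bound, $E[|S_i|] = |U|/\sqrt{n} = c_1\sqrt{n} < \sqrt{n}$, so a standard Chernoff bound yields $|S_i| \le 2\sqrt{n}$ with probability $1 - \exp(-\Omega(\sqrt{n}))$; a union bound over all $2\sqrt{n}$ groups gives every size bound simultaneously with probability $1 - o(1)$.

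For the edge bound, fix a pair $(i,j)$. Since each vertex has degree at most $d$, we have $|E| \le dn$, so $E[|(S_i \times T_j) \cap E|] \le |E|/n \le d$. The subtle point is that the individual edge indicators are not mutually independent, as edges sharing an endpoint are positively correlated. I would bypass this by conditioning on the partition of $W$: given that partition, the edge count becomes $\sum_{u \in U} \mathbf{1}[u \in S_i] \cdot |N(u) \cap T_j|$, a sum of \emph{independent} $[0,d]$-bounded random variables whose mean is at most $2d$ (using the $|T_j| \le 2\sqrt{n}$ bound). Applying a Bernstein- or Chernoff-type inequality to this sum, then union bounding over the $n$ pairs $(i,j)$, should yield a partition meeting the $2d^2$ edge bound. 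The whole construction can then be derandomized in polynomial time via the method of conditional expectations, picking each vertex's group assignment so as to minimize a pessimistic estimator for the probability of failure.

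The main obstacle will be getting strong enough concentration: when $d$ is a constant, standard multiplicative-Chernoff bounds on a sum with mean $O(d)$ and summands bounded by $d$ give only a constant failure probability per pair, which is too weak for a union bound over $n$ pairs. My backup would be a direct constructive greedy approach: partition $W$ into groups of size at most $2\sqrt{n}$ arbitrarily, which induces for each $u \in U$ a \emph{signature} $\mathrm{sig}(u) \subseteq [\sqrt{n}]$ of size at most $d$ recording which $T_j$'s contain a neighbor of $u$; then assign vertices of $U$ greedily so that for every label $j$, at most $2d$ vertices $u$ with $j \in \mathrm{sig}(u)$ land in any given $S_i$. Since $|N(u) \cap T_j| \le d$ for every $u$, this directly yields $|(S_i \times T_j) \cap E| \le 2d \cdot d = 2d^2$, and a straightforward averaging/load-balancing argument shows such a greedy assignment is always feasible while keeping the group sizes at most $2\sqrt{n}$.
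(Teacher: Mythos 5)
Your primary plan (independent random assignment, concentration, derandomization via conditional expectations) is a genuinely different route from the paper's, but you correctly diagnose why it cannot work as is: for constant $d$, the edge count of a fixed pair $(S_i,T_j)$ is a sum of $[0,d]$-bounded independent terms with mean $O(d)$, so the probability of exceeding $2d^2$ is only a constant, which cannot survive a union bound over all $n$ pairs. The paper therefore uses no randomness at all; it uses a deterministic greedy, which is also the shape of your backup: the paper arbitrarily splits $U$ into $\sqrt{n}$ balanced parts $S_1,\dots,S_{\sqrt n}$ and then greedily places each $w\in W$ into some $T_j$, using Markov's inequality twice (once for the size constraint, once for the edge constraint) to show an admissible $T_j$ always remains. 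Your backup is the mirror of this.

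There is, however, a real gap in the quantity your greedy tracks. You constrain the \emph{signature count} $\#\{u\in S_i : j\in\mathrm{sig}(u)\}$ by the threshold $2d$, and then translate to $|(S_i\times T_j)\cap E|\le 2d\cdot d=2d^2$. But for a fixed label $j$, the signature counts sum over $i$ to $\#\{u: N(u)\cap T_j\neq\emptyset\}$, which can be as large as $d|T_j|=\Theta(d\sqrt n)$; hence the average signature count per $S_i$ is $\Theta(d)$, and Markov against the threshold $2d$ only blocks a \emph{constant fraction} of the $S_i$'s for each label. Union-bounding over the up to $d$ labels in $\mathrm{sig}(u)$ plus the size constraint can then exhaust the entire budget of $\sqrt n$ parts; concretely, feasibility of your greedy needs roughly $(d-1)c_2<1$, which does not hold in general for the lemma as stated. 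The paper avoids this by constraining the edge count $|(S_i\times T_j)\cap E|$ \emph{itself} against the threshold $2d^2$: this quantity also has per-pair average $\Theta(d)$, so Markov now gives slack $\Theta(1/d)$ per index, and after union-bounding over the at most $d$ indices that the current vertex actually touches, the blocked fraction stays below one. Tracking edge counts rather than signature counts is precisely what buys the extra factor of $d$ that makes the averaging/load-balancing argument go through.
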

\begin{proof}
The algorithm is as follows. First we arbitrarily split $U$ into $\sqrt n$ many
sets $S_1, S_2, \dots, S_{\sqrt n}$, and so each set $S_i$ has size $c_1 
\sqrt n < 2 \sqrt n$. Then we iteratively construct the partition of $W$ into sets $T_1,
T_2, \dots, T_{\sqrt n}$ in the following
way. We initialize each set $T_j$ to be the empty set. In each iteration, we
pick a vertex of $w \in W$ that has not already been assigned to a set. We
find a set $T_j$ such that $|T_j| \le 2 \cdot \sqrt{n}$, and such that for
all $i$ we have $|(S_i \times T_j) \cap E| \le 2 \cdot d^2$. We assign
$w$ to $T_j$ and repeat.

Obviously, for the algorithm to be correct, we must prove that for each vertex
$w$ that is considered, there does exist a set $T_j$ that satisfies the required
constraints. For this, we rely on the following two properties.
\begin{itemize}
\item The average number of
vertices in a set $T_j$ is at most $c_2 \sqrt n < \sqrt n$, and so by Markov's
inequality strictly less than half the sets can have size more than $2 \sqrt n$,
and so we lose strictly less than half the sets $T_j$ to the size constraint.

\item Since each vertex has degree at most $d$, the graph has at most $dn$
edges, and so the average number of edges between each pair of sets $S_i$ and $T_j$ is
$dn/(\sqrt{n} \cdot \sqrt{n}) = d$. Again, using Markov's inequality we can
conclude that there are at most $1/2d$ pairs of sets $S_i$
and $T_j$ that have more than $2d^2$ edges between them. Hence, even in the
worst case, we can lose at most $1/2d$ sets $T_j$ to the edge constraints.
\end{itemize}
So, we lose strictly less than half the sets to the size constraints, and $1/2d
\le 1/2$ the sets to the edge constraints. Hence, by the union bound, we have
shown that there is at least one set $T_j$ that satisfies both constraints
simultaneously.
\qed
\end{proof}

\paragraph{\bf A free game}
Note that Lemma~\ref{lem:split} can be applied to the 3SAT formula that arises
from Dinur's PCP theorem, because the number of variables and number of
constraints are both a constant fraction of the number of nodes in the
associated bipartite graph, and because each vertex has either has degree $d$ or
degree $3$. We use this to construct the following free game, which is highly
reminiscent of the clause variable game given by Aaronson, Impagliazzo, and
Moshkovitz~\cite{AIM14}.

\begin{definition}
Given a 3SAT formula $\phi$ of size $n$, we define a free game $\mathcal{F}_\phi$ in the
following way.
\begin{enumerate}
\item \art begins by applying Dinur's PCP theorem to $\phi$ to obtain a formula
$\psi$ of size $N = n \polylog(n)$,
and then uses  Lemma~\ref{lem:split} to split the variables of $\psi$ into sets
$S_1, S_2, \dots, S_{\sqrt{n}}$ and the clauses of $\psi$ into sets $T_1, T_2,
\dots, T_{\sqrt{N}}$.
\item \art picks an index $i$ uniformly at random from $[\sqrt{N}]$, 
and independently an index $j$ uniformly at random from $[\sqrt{N}]$. He
sends $S_i$ to \mone and $T_j$ to \mtwo.
\item \mone responds by giving a truth assignment to every variable in $S_i$,
and \mtwo responds by giving a truth assignment to every variable that is
involved with a clause in $T_j$.
\item \art awards the Merlins payoff $1$ if and only if both of the following
conditions hold.
\begin{itemize}
\item \mtwo returns an
assignment that satisfies all clauses in $T_j$.
\item For every variable $v$ that appears in $S_i$ and some clause of $T_j$, the
assignment to $v$ given by $\mone$ agrees with the assignment to $v$ given by
$\mtwo$. Note that this condition is always satisfied when $S_i$ and $T_j$ share
no variables.
\end{itemize}
\art awards payoff $0$ otherwise.
\end{enumerate}
\end{definition}

If $n$ is the size of $\phi$, then
when we write $\mathcal{F}_\phi$ down as a free game $(X, Y, A, B,
\mathcal{D}, V)$, the number of questions in the sets $X$ and $Y$ is $\sqrt{n
\polylog(n)}$, and the number of answers in $A$ and $B$ is $2^{2\sqrt{n
\polylog(n)}}$, where the extra $\polylog(n)$ factor arises due to the
application of the PCP theorem.

The following lemma shows that if $\phi$ is unsatisfiable, then the value of
this free game is bounded away from $1$. Again, the ideas used to prove this
lemma are clearly evident in the work of Babichenko, Papadimitriou, and
Rubinstein~\cite{BPR16}.

\begin{lemma}[\cite{BPR16}]
\label{lem:csgap}
If $\phi$ is satisfiable then $\omega(\mathcal{F}_\phi) = 1$.
If $\phi$ is unsatisfiable then $\omega(\mathcal{F}_\phi) \le 1 - \eps/2d$.
\end{lemma}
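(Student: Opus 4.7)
The statement decomposes into a completeness direction and a soundness direction, which I would handle separately.

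For \textbf{completeness}, suppose $\phi$ is satisfiable. Then Dinur's theorem (Theorem~\ref{thm:dinur}) yields a satisfying assignment $\alpha^{*}$ for the amplified formula $\psi$, and both Merlins would commit to $\alpha^{*}$ as a shared strategy: on receiving $S_i$, \mone returns the restriction $\alpha^{*}|_{S_i}$, and on receiving $T_j$, \mtwo returns the restriction of $\alpha^{*}$ to the variables appearing in the clauses of $T_j$. Under these strategies every clause in $T_j$ is satisfied because $\alpha^{*}$ satisfies all of $\psi$, and any variable shared between $S_i$ and the variables of $T_j$ receives the same value from both Merlins by construction. Both of \art's verification conditions therefore always hold and $\omega(\mathcal{F}_\phi)=1$.

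For \textbf{soundness}, I would fix arbitrary strategies $s_1,s_2$ and extract a global assignment $\alpha$ of $\psi$ from $s_1$ by $\alpha(v):=s_1(S_{i(v)})(v)$, where $i(v)$ is the unique index of the partition class containing $v$; write $\beta_j:=s_2(T_j)$. Since $\phi$ is unsatisfiable, Dinur's theorem gives $\sat(\psi)<1-\eps$, so the set $U$ of clauses of $\psi$ unsatisfied by $\alpha$ has size $|U|\ge \eps m$, where $m$ is the number of clauses of $\psi$. For each $C\in U$, let $j_C$ be the unique index with $C\in T_{j_C}$. Two cases arise. If $\beta_{j_C}$ fails to satisfy $C$ (Case A), then every pair $(S_i,T_{j_C})$ is bad because verifier condition (i) is violated. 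If $\beta_{j_C}$ does satisfy $C$ (Case B), then $\alpha$ and $\beta_{j_C}$ must disagree on some variable $v_C\in V(C)$, and the pair $(S_{i(v_C)},T_{j_C})$ is bad because verifier condition (ii) is violated.

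The remaining and technically substantive step is to convert these witnesses into a quantitative lower bound on the fraction of bad \art-questions $(i,j)$, thereby obtaining $\omega(\mathcal{F}_\phi)\le 1-\eps/(2d)$. The main obstacle is overcounting: a single bad pair $(i,j)$ can be witnessed by many distinct $C\in U$, so a naive union bound is useless. I would control this using the edge inequality $|(S_i\times T_j)\cap E|\le 2d^{2}$ from Lemma~\ref{lem:split}, which caps how many clauses in $T_j$ can have a disagreeing variable in any single $S_i$, together with the observation that each Case-A index $j$ absorbs at most $|T_j|\le 2\sqrt{n}$ witnesses while producing a whole column of $\sqrt{n}$ bad pairs. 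Combined with the degree bound that each variable of $\psi$ lies in at most $d$ clauses (so $3m\le d\,n_v$ forces $m=\Theta(n)$), each bad pair absorbs only $\mathrm{poly}(d)$ witnesses, so the fraction of bad pairs is $\Omega(\eps/\mathrm{poly}(d))$. The final step is simply to tighten the arithmetic in this edge-based counting to recover the quoted $\eps/(2d)$ bound.
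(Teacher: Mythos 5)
Your completeness argument matches the paper's exactly. For soundness, your high-level plan is also the same as the paper's: extract a global assignment $\alpha$ from $s_1$, invoke Dinur's gap to get $\Omega(\epsilon m)$ unsatisfied clauses, and use the $2d^2$ edge bound of Lemma~\ref{lem:split} to control overcounting when turning witnesses into bad question pairs. The one structural difference is worth flagging: the paper dispenses with your Case~B entirely by asserting that, ``without loss of generality,'' \mtwo's best response agrees with $\alpha$ everywhere, so only verifier condition~(i) can fail. That WLOG is actually shaky --- if $\alpha$ leaves a clause of $T_j$ unsatisfied, \mtwo can be strictly better off flipping a variable $v$ (he then loses only the single pair $(S_{i(v)},T_j)$ to an inconsistency rather than the entire column of $\sqrt{N}$ pairs). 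Your Case~A / Case~B split handles arbitrary $s_2$ correctly and is, if anything, the more careful version of the argument that the paper itself gestures at.

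The counting you defer is short, and you should finish it rather than wave at it. For Case~A, any column $T_j$ containing a Case-A witness is \emph{entirely} bad; since $|T_j|\le 2\sqrt{N}$, the $|A|$ Case-A clauses hit at least $|A|/(2\sqrt{N})$ distinct columns, each contributing $\sqrt{N}$ bad pairs, so at least $|A|/2$ bad pairs. For Case~B, each witness $C$ contributes the edge $(v_C,C)\in S_{i(v_C)}\times T_{j_C}$, and the $2d^2$ edge cap bounds witnesses per pair, giving at least $|B|/(2d^2)$ bad pairs. With $|A|+|B|\ge\epsilon m$ and $m$ a constant fraction of $N$ (guaranteed by the bounded-degree, constant-fraction structure from Theorem~\ref{thm:dinur} and Lemma~\ref{lem:split}), the bad-pair fraction is $\Omega(\epsilon/d^2)$. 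One caution: the paper's intermediate claim of ``at least $\epsilon dN$ unsatisfied clauses'' is itself a slip (there are only $\Theta(N)$ clauses in total), so do not chase its quoted constant $\epsilon/(2d)$ too literally --- any fixed constant gap in terms of $\epsilon$ and $d$ is all Theorem~\ref{thm:qplower} needs.
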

\begin{proof}
The case where $\sat(\phi) = 1$ is straightforward. Since there exists a
satisfying assignment for $\phi$, there also exists a satisfying assignment 
for $\psi$. If the two Merlins play according to this satisfying assignment,
then they obviously achieve an expected payoff of $1$.

For the other claim, first observe that we can assume that both Merlins play
deterministic strategies, since the game is co-operative, and therefore nothing
can be gained through randomization. So, let $s_1$ be a strategy for \mone.
Observe that since $S_1, S_2, \dots, S_{\sqrt{N}}$ partition the variables of
$\psi$, we have that $s_1$ yields an assignment to the variables of $\psi$. 

Let us fix an arbitrary deterministic strategy $s_1$ for \mone. We have that the
payoff to \mtwo for an individual question $T_j$ can be computed as follows:
\begin{itemize}
\item For every set $S_i$ for which there are no edges between the variables in
$S_i$ and $T_j$, \mtwo gets payoff $1$ ``for free.''
\item Otherwise, \mtwo gets payoff $1$ only if the assignments to the clauses
agree with the assignment implied by $s_1$.
\end{itemize}
From this, we can see that when \mone plays $s_1$, \mtwo can maximize
his payoff by playing the strategy that agrees everywhere with the assignment
chosen by \mone. So let $s_2$ denote this strategy.

Since $\phi$ is unsatisfiable, the PCP theorem tells us that $\sat(\psi) < 1 -
\eps$. 
Thus, there are at least $\eps d N$ clauses that are not satisfied when
$s_1$ is played against $s_2$. Since Lemma~\ref{lem:split} ensures that the
maximum number of edges between two sets is $2d^2$, there must therefore be at
least $\eps dN/2d^2 = \eps N/2d$ pairs of sets that give payoff $0$ to the
Merlins under $s_1$ and $s_2$.
Since there are exactly $N$ pairs of sets in total, this means that the expected
payoff to the Merlins is bounded by $1 - \eps/2d$.
\qed
\end{proof}

Finally, we can formulate the lower bound that we will use 
in this paper. The proof is the same as the one given
in~\cite{AIM14}, but we use the free game $\mathcal{F}_\phi$, rather than the
construction originally given in that paper.

\begin{theorem}[\cite{BPR16}]
\label{thm:qplower}
Assuming ETH, there is a small constant $\delta$ below which the problem
\freegame cannot be solved faster than $N^{\mathrm{\widetilde O}(\log N)}$, even
when the question sets have size $\log N$.
\end{theorem}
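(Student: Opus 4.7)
The plan is to reduce 3SAT to \freegame via the construction of $\mathcal{F}_\phi$ already defined in the excerpt, and then track parameters carefully enough to convert the ETH lower bound on 3SAT into a quasi-polynomial lower bound on approximating the value of the free game.

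First, given a 3SAT instance $\phi$ of size $n$, I would build $\mathcal{F}_\phi$ by applying Dinur's PCP (Theorem~\ref{thm:dinur}) to obtain a constant-degree formula $\psi$ of size $N_\psi = n\cdot\polylog(n)$, then partitioning its bipartite graph via Lemma~\ref{lem:split}. The resulting free game has question sets of size $\sqrt{N_\psi}$ and answer sets of size $2^{O(\sqrt{N_\psi})}$, so its representation size $M$ satisfies $\log M = \Theta(\sqrt{N_\psi}) = \Theta(\sqrt{n\polylog n})$. In particular, the question sets have size $\Theta(\log M)$, matching the restriction in the theorem statement.

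Next, I would fix a constant $\delta < \eps/(4d)$, where $\eps$ is the soundness parameter from Theorem~\ref{thm:dinur} and $d$ is the degree bound. By Lemma~\ref{lem:csgap}, any additive $\delta$-approximation of $\omega(\mathcal{F}_\phi)$ decides satisfiability of $\phi$. Suppose for contradiction that \freegame can be solved in time $M^{o(\log M/(\log\log M)^c)}$ for every constant $c$. Chained with the polynomial-time construction of $\mathcal{F}_\phi$, this yields a 3SAT algorithm running in time $\exp\bigl(o((\log M)^2/(\log\log M)^c)\bigr)$. Plugging in $(\log M)^2 = \Theta(n\polylog n) = \Theta(n\log^k n)$ for a fixed constant $k$, and $\log\log M = \Theta(\log n)$, the exponent becomes $o(n\log^{k-c} n)$, which collapses to $o(n)$ as soon as we take $c > k$, contradicting ETH.

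The main obstacle is simply the bookkeeping: all the combinatorial content already lives in Dinur's PCP, the partitioning lemma, and the completeness--soundness gap of Lemma~\ref{lem:csgap}, so the theorem is an almost mechanical consequence of chaining them together. The one subtlety worth highlighting is why the bound is $\widetilde\Omega(\log N)$ rather than $\Omega(\log N)$: the polylog factor from Dinur's PCP is exactly what gets absorbed by the $(\log\log M)^c$ denominator that the $\widetilde\Omega$ notation permits, and any future PCP theorem with linear blow-up would immediately upgrade the exponent to $\Omega(\log N)$, as the authors remark earlier in the paper.
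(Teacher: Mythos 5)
Your proposal matches the paper's proof essentially step for step: build $\mathcal{F}_\phi$ via Dinur's PCP and Lemma~\ref{lem:split}, use the completeness--soundness gap of Lemma~\ref{lem:csgap} so that a sufficiently small additive approximation decides SAT, observe $\log M = \Theta(\sqrt{n\polylog n})$ (hence logarithmic question sets), and push the assumed $M^{o(\log M/(\log\log M)^c)}$ running time through to $2^{o(n)}$ by taking $c$ larger than the polylog degree, contradicting ETH. The only differences are minor bookkeeping refinements --- your threshold $\delta < \eps/(4d)$ rather than the paper's looser $\eps/(2d)$, and quantifying the negated lower bound over all $c$ --- which if anything tighten the paper's own exposition rather than depart from it.
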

\begin{proof}
Lemma~\ref{lem:csgap} implies that if we can approximate the value of 
$\mathcal{F}_\phi$ with an additive error of less than $\eps/2d$, then we can
solve the satisfiability problem for~$\phi$.

Assume, for the sake of contradiction, that there exists an algorithm that can
solve the \freegame problem in time $N^{o\left(\frac{\log
N}{(\log \log N)^c}\right)}$ for some constant $c$ that will be fixed later. 
Observe that the free game~$\mathcal{F}_\phi$ has size $N = O(2^{\sqrt{n
\polylog(n)}})$, and so the hypothesized algorithm would run in time:
\begin{align*}
& \exp\left( o \left( \frac{ \log^2 N } {(\log\log(N))^c} \right) \right) 
	= \exp\left( o \left( \frac{ n \polylog(n) } {(\log( \sqrt{n}
\polylog(n)))^c} \right) \right).
\end{align*}
If we set $c$ to be greater than the degree of the polynomial in the
$\polylog(n)$ from the numerator, then we can conclude that the running time
would be $2^{o(n)}$, which would violate the ETH.
\qed
\end{proof}





\section{Hardness of approximating social welfare}
\label{sec:bday}

\paragraph{\bf Overview}
In this section, we study the following \emph{social welfare} problem for a bimatrix
game $\gcal = (R,C)$.
The \emph{social welfare} of a strategy profile $(\xbf, \ybf)$ is denoted by
$\sw(\xbf, \ybf)$ and is defined to
be $\xbf^TR\ybf + \xbf^TC\ybf$. Given an $\eps \ge 0$, we define the set of all
$\eps$ equilibria as 
\begin{equation*}
\eeq = \{(\xbf, \ybf) \; : \; (\xbf, \ybf) \text{ is an } \eps\text{-NE}\}.
\end{equation*}
Then, we define the \emph{best social welfare} achievable by an $\eps$-NE in
$\mathcal{G}$ as 
\begin{equation*}
\bsw(\gcal, \eps) = \max\{\sw(\xbf, \ybf) \; : \; (\xbf, \ybf) \in \eeq\}.
\end{equation*}
Using these definitions we now define the main problem that we consider:

\bigskip
\begin{tcolorbox}[title=$\eps$-NE $\delta$-SW]
Input: A bimatrix game $\mathcal{G}$, and two constants $\eps, \delta > 0$.
\smallskip

Output: An $\eps$-NE $(\xbf,\ybf)$ s.t. $\sw(\xbf, \ybf)$ is within $\delta$ of $\bsw(\gcal, \eps)$. 
\end{tcolorbox}

\noindent
We show a lower bound for this problem by reducing from \freegame. 
Let~$\mathcal{F}$ be a free game of size $n$ from the family of free
games that were used to prove Theorem~\ref{thm:qplower} (from now on we will
drop the subscript $\phi$, since the exact construction of $\mathcal{F}$ is not
relevant to us.) 
We have that either
$\omega(\mathcal{F}) = 1$ or $\omega(\mathcal{F}) < 1 - \delta$ for some fixed
constant $\delta$, and that it is hard to determine which of these is the case.
We will construct a game $\mathcal{G}$ such that for $\eps = 1 - 4g \cdot
\delta$, where $g < \frac{5}{12}$ is a fixed constant that we will define at the
end of the proof, we have the following properties.
\begin{itemize}
\item (\textbf{Completeness}) If $\omega(\mathcal{F}) = 1$, then the unscaled
$\bsw(\mathcal{G}, \eps) = 2$.
\item (\textbf{Soundness}) If $\omega(\mathcal{F}) < 1 - \delta$, then the unscaled
$\bsw(\mathcal{G}, \eps) < 2(1 - g \cdot \delta)$.
\end{itemize}
This will allow us to prove our lower bound using Theorem~\ref{thm:qplower}.

\subsection{The construction}

%
We use $\fcal$ to construct a bimatrix game, which we will denote as
$\mathcal{G}$ throughout the rest of this section. The game is built out of four
subgames, which are arranged and defined as follows.

\begin{center}
{\small
\bimatrixgame{3.5mm}{2}{2}{I}{II}%
{{ }{ }}%
{{ }{ }}%
{
\payoffpairs{1}{{$R$}{$-D_2$}}{{$C$}{$D_2$}}
\payoffpairs{2}{{$D_1$}{$0$}}{{$-D_1$}{$0$}}
}
	}
\end{center}
\begin{itemize}

\item The game $(R, C)$ is built from $\fcal$ in the following way.
Each row of the game corresponds to a pair $(x,a) \in 
X \times A$ and each column corresponds to a pair $(y,b)
\in Y \times B$. Since all free games are cooperative,  the payoff for each
strategy pair $(x,a),(y,b)$ is defined to be
$R_{(x,a),(y,b)} = C_{(x,a),(y,b)} = V(x, y, a(x), b(y)).$

\item The game $(D_1, -D_1)$ is a zero-sum game. The game is a slightly modified
version of a game devised by Feder, Nazerzadeh, and Saberi~\cite{FNS07}. Let $H$
be the set of all functions of the form $f : Y \rightarrow \{0, 1\}$ such that
$f(y) = 1$ for exactly half%
\footnote{If $|Y|$ is not even, then we can create a new free game in which
each question in $|Y|$ appears twice. This will not change the value of the free
game.}
of the elements $y \in Y$. The game has $|Y \times
B|$ columns and $|H|$ rows. For all $f \in H$ and all $(y, b) \in Y$ the payoffs
are
\begin{equation*}
\left(D_1\right)_{f, (y, b)} = 
\begin{cases}
\frac{4}{1 + 4g \cdot \delta} & \text{if $f(y) = 1$,} \\
0 & \text{otherwise.}
\end{cases}
\end{equation*}

\item The game $(-D_2, D_2)$ is built in the same way as the game $(D_1, -D_1)$,
but with the roles of the players swapped. That is, each column of $(-D_2, D_2)$
corresponds to a function that picks half of the elements of $X$.

\item The game $(0, 0)$ is a game in which both players have zero matrices.

\end{itemize}

Observe that the size of $(R, C)$ is the same as the size of $\mathcal{F}$. The
game $(D_1, -D_1)$ has the same number of columns as~$C$, and the number of rows
is at most $2^{|Y|} \le 2^{ O(\log |\mathcal{F}|)} = |\mathcal{F}|^{O(1)}$,
where we are crucially using the fact that Theorem~\ref{thm:qplower} allows us
to assume that the size of $Y$ is $O(\log |\mathcal{F}|)$. By the same
reasoning, the number of columns in $(-D_2, D_2)$ is at most
$|\mathcal{F}|^{O(1)}$. Thus, the size of $\mathcal{G}$ is
$|\mathcal{F}|^{O(1)}$, and so this reduction is polynomial.

\subsection{Completeness}

To prove completeness, it suffices to show that, if $\omega(\fcal) = 1$, then
there exists a 
$(1 - 4g \cdot \delta)$-UNE of $\mathcal{G}$ that has social welfare $2$. To do
this, assume that~$\omega(\fcal) = 1$, and take a pair of optimal strategies
$(s_1, s_2)$ for $\fcal$ and turn them
into strategies for the players in~$\mathcal{G}$. More precisely, the row player
will place probability $\frac{1}{|X|}$ on each answer chosen by $s_1$, and the
column player  will place probability $\frac{1}{|Y|}$ on each answer chosen by
$s_2$. By construction, this gives both players payoff $1$, and hence the social
welfare is $2$. The harder part is to show that this is an approximate
equilibrium, and in particular, that neither player can gain by playing a
strategy in $(D_1, -D_1)$ or $(-D_2, D_2)$. We prove this in the following
lemma.

\begin{lemma}
\label{lem:complete}
If $\omega(\fcal) = 1$, then there exists a $(1 - 4g \cdot \delta)$-UNE $(\xbf, \ybf)$
of $\mathcal{G}$ with $\sw(\xbf, \ybf) = 2$.
\end{lemma}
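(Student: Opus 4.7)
The plan is to exhibit the explicit profile $(\xbf,\ybf)$ coming from an optimal free-game pair and verify everything by direct computation. Take $(s_1,s_2)\in S_1\times S_2$ with $p(\fcal,s_1,s_2)=1$, which exists because $\omega(\fcal)=1$; since $\dcal$ is uniform over $X\times Y$, this forces $V(x,y,s_1(x),s_2(y))=1$ for every pair $(x,y)$. I let $\xbf$ place mass $1/|X|$ on each block-I row $(x,s_1(x))$ and mass $0$ everywhere else, and define $\ybf$ analogously using $s_2$. Under this profile only block-I entries of $(R,C)$ are touched, each equal to $1$, so both expected payoffs are $1$ and $\sw(\xbf,\ybf)=2$.

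It remains to bound each player's regret by $1-4g\cdot\delta$. The two cases are symmetric, so I handle the row player. A pure deviation to a block-I row $(x',a')$ yields
\[
\frac{1}{|Y|}\sum_{y\in Y} V(x',y,a',s_2(y)) \;\le\; 1,
\]
since $V\in\{0,1\}$, so no block-I row beats the current payoff of $1$. A pure deviation to a block-II row $f\in H$ yields
\[
\frac{1}{|Y|}\sum_{y\in Y}\frac{4}{1+4g\cdot\delta}\cdot\mathbbm{1}\!\left[f(y)=1\right] \;=\; \frac{4}{1+4g\cdot\delta}\cdot\frac{|Y|/2}{|Y|} \;=\; \frac{2}{1+4g\cdot\delta},
\]
where I use the defining property of $H$. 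Since this value does not depend on $f$, it equals the best block-II response payoff, so the overall best response payoff for the row player is $\tfrac{2}{1+4g\cdot\delta}$.

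Therefore the row player's regret is
\[
\frac{2}{1+4g\cdot\delta}-1 \;=\; \frac{1-4g\cdot\delta}{1+4g\cdot\delta} \;\le\; 1-4g\cdot\delta,
\]
the final step holding whenever $4g\delta\le 1$, which will be ensured by the eventual choice of $g<5/12$. I expect no real obstacle here. The constant $\tfrac{4}{1+4g\cdot\delta}$ in $D_1$ is engineered precisely so that the best off-block deviation pays $1$ plus the target regret, and the ``choose exactly half of $Y$'' structure of $H$ forces every block-II row to deliver the same payoff regardless of the column player's block-I mixture, which is exactly why this very simple uniform profile already works. The symmetric argument for the column player, obtained by swapping the two players' roles and replacing $D_1$ by $D_2$ and $s_1$ by $s_2$, closes the proof.
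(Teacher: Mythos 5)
Your proof is correct and takes essentially the same approach as the paper: the same uniform-on-optimal-answers profile, the same two-case deviation analysis, and the same observation that every block-II row pays exactly $\tfrac{2}{1+4g\cdot\delta}$ because the column marginal on $Y$ is uniform. One small imprecision: the final inequality $\tfrac{1-4g\cdot\delta}{1+4g\cdot\delta}\le 1-4g\cdot\delta$ requires $4g\cdot\delta\le 1$ (the paper invokes $g\le\tfrac14$ together with $\delta\le 1$), which is not guaranteed merely by $g<\tfrac{5}{12}$ as you state, but it certainly holds for the eventual choice $g=\tfrac{1}{138}$.
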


\begin{proof}
Let $(s_1, s_2) \in S_1 \times S_2$ be a pair of optimal strategies for \mone
and \mtwo in \fcal. For each $(x, a) \in X \times A$ and each $(y, b) \in Y
\times B$, we define
\begin{align*}
\xbf(x, a) = \begin{cases}
\frac{1}{|X|} & \text{if $s_1(x) = a$.} \\
0 & \text{otherwise.}
\end{cases}
&&
\ybf(y, b) = \begin{cases}
\frac{1}{|Y|} & \text{if $s_2(y) = b$.} \\
0 & \text{otherwise.}
\end{cases}
\end{align*}
Clearly, by construction, we have that the payoff to the row player under
$(\xbf, \ybf)$ is equal to $p(\fcal, s_1, s_2) = 1$, and therefore $(\xbf, \ybf)$
has social welfare~$2$.

On the other hand, we must prove that $(\xbf, \ybf)$ is a 
$(1-4g\cdot\delta)$-UNE. To do
so, we will show that neither player has a deviation that increases their payoff
by more than $(1-4g\cdot\delta)$. We will show
this for the row player; the proof for the column player is symmetric.
There are two types of row to consider.
\begin{itemize}
\item First suppose that $r$ is a row in the sub-game $(R, C)$. We claim that
the payoff of $r$ is at most $1$. This is because the maximum payoff in $R$ 
is~$1$, while the maximum payoff in $-D_2$ is $0$. Since the row player already
obtains payoff $1$ in $(\xbf, \ybf)$, row $r$ cannot be a profitable deviation.

\item Next suppose that $r$ is a row in the sub-game $(D_1, -D_1)$. Since we 
have
$\sum_{b \in B} \ybf(y, b) = \frac{1}{|Y|}$ for every question $y$, we have that 
all rows in~$D_1$ have the same payoff. This payoff is
\begin{equation*}
\frac{1}{2} \cdot \left( \frac{4}{1 + 4\cdot \delta} \right)  = 
\frac{2}{1 + 4g\cdot \delta} = 2 - \frac{8g \cdot \delta}{1 + 4g \cdot \delta}\ .
\end{equation*}
Since $\delta \le 1$ and $g \le \frac{1}{4}$ we have
\begin{equation*}
\frac{8}{1 + 4g \cdot \delta} \ge \frac{8}{1 + 4g} \ge 4\ .
\end{equation*}
Thus, we have shown that the payoff of $r$ is at most
$2 - 4 g \cdot \delta$.
Thus the row player's regret is at most $1 - 4g \cdot \delta$.
\qed
\end{itemize}
\end{proof}

\subsection{Soundness}
We now suppose that $\omega(\fcal) < 1 - \delta/2$, and we will prove that all
$(1 - 4g \cdot \delta)$-UNE  provide social welfare at most $2 - 2g \cdot
\delta$. 
Throughout this subsection, we will fix $(\xbf, \ybf)$ to be a $(1 - 4g \cdot
\delta)$-UNE of $\mathcal{G}$. We begin by making a simple observation about the
amount of probability that is placed on the subgame $(R, C)$.

\begin{lemma}
\label{lem:problp}
If $\sw(\xbf, \ybf) > 2 - 2g \cdot \delta$, then
\begin{itemize}
\item $\xbf$ places at least $(1 - g \cdot \delta)$ probability on rows in $(R, C)$,
and 
\item $\ybf$ places at least $(1 - g \cdot \delta)$ probability on columns in~$(R, C)$.
\end{itemize}
\end{lemma}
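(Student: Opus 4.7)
The plan is to exploit the zero-sum structure of the three ``scaffolding'' subgames, which forces all social welfare to come from the $(R,C)$ block.

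First, I would observe that the full row-payoff matrix has blocks $R, -D_2, D_1, 0$ and the full column-payoff matrix has blocks $C, D_2, -D_1, 0$. Summing these blockwise, the welfare matrix $R+C$ (restricted to the full game) has $R+C$ in the top-left and the zero matrix everywhere else, because $D_1 + (-D_1) = 0$ and $-D_2 + D_2 = 0$. Hence only mass placed simultaneously on top rows (those belonging to $(R,C)$) by $\xbf$ and on left columns by $\ybf$ can contribute to $\sw(\xbf,\ybf)$.

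Let $p$ denote the total probability $\xbf$ assigns to rows of $(R,C)$ and $q$ the total probability $\ybf$ assigns to columns of $(R,C)$. Since every entry of $R$ and $C$ lies in $[0,1]$ (they record values of $V$), every entry of $R+C$ is at most $2$, and so
\begin{equation*}
\sw(\xbf,\ybf) \;=\; \xbf^T(R+C)\ybf \;\le\; 2 \cdot p \cdot q.
\end{equation*}
Under the hypothesis $\sw(\xbf,\ybf) > 2 - 2g\cdot\delta$, this gives $pq > 1 - g\cdot\delta$.

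Finally, since $p, q \in [0,1]$ we have $p \ge pq > 1 - g\cdot\delta$, and symmetrically $q > 1 - g\cdot\delta$, which is exactly the conclusion of the lemma. There is no real obstacle here: the only ingredient beyond the hypothesis is the observation that the three auxiliary blocks $(D_1,-D_1)$, $(-D_2,D_2)$ and $(0,0)$ are zero-sum (indeed zero in welfare), so the rest is a one-line bound followed by the trivial inequality $pq \le \min(p,q)$.
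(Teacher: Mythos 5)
Your proof is correct and takes essentially the same route as the paper: both hinge on the observation that only the $(R,C)$ block contributes to social welfare (the other three blocks sum to zero in welfare) and that the $(R,C)$ block contributes at most $2$ per unit of joint mass. The paper argues by contradiction and handles the two bullets symmetrically, whereas you argue directly and extract both at once from $pq > 1 - g\cdot\delta$; this is a minor stylistic difference, not a different method.
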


\begin{proof}
We will prove the lemma for $\xbf$; the proof for $\ybf$ is entirely
symmetric. For the sake of contradiction, suppose that $\xbf$ places strictly
less than $(1 - g \cdot \delta)$ probability on rows in $(R, C)$. Observe that every subgame
of $\mathcal{G}$ other than $(R, C)$ is a zero-sum game. Thus, any probability
assigned to these sub-games contributes nothing to the social welfare. On the
other hand, the payoffs in $(R, C)$ are at most $1$. So, even if the column
player places all probability on columns in $C$, the social welfare $\sw(\xbf,\ybf)$ 
will be
strictly less than $2 \cdot (1 - g \cdot \delta) + g \cdot \delta \cdot 0 = 2 -
2 g \cdot \delta$, a contradiction. \qed
\end{proof}

So, for the rest of this subsection, we can assume that both $\xbf$ and~$\ybf$
place at least $1 - g \cdot \delta$ probability on the subgame $(R, C)$. We will
ultimately show that, if this is the case, then both players have payoff at most
$1 - \frac{1}{2} \cdot \delta + m g \cdot \delta$ for some constant $m$ that
will be derived during the proof. Choosing $g = 1/(2m + 2)$ then ensures that
both players have payoff at most $1 - g \cdot \delta$, and therefore that the
social welfare is at most $2 - 2g \cdot \delta$.

\paragraph{\bf A two-prover game}
We use $(\xbf, \ybf)$ to create a two-prover game. First, we define two
distributions that capture the marginal probability that a question is played by
$\xbf$ or $\ybf$. Formally, we define a distribution $\xbf'$ over $X$ and a 
distribution $\ybf'$ over $Y$ such that for all $x \in X$ and $y \in Y$ we have
$\xbf'(x) = \sum_{a \in A} \xbf(x, a),$ and 
$\ybf'(y) = \sum_{b \in B} \ybf(y, b).$
By Lemma~\ref{lem:problp}, we can assume that $\| \xbf' \|_1 \ge 1 - g \cdot \delta$
and $\| \ybf' \|_1 \ge 1 - g \cdot \delta$. 

Our two-prover game will have the same question sets, answer sets, and
verification function as \fcal, but a different distribution over the question
sets. Let $\txy = (X, Y, A, B, \mathcal{D}, V)$, where $\mathcal{D}$ is the
product of $\xbf'$ and $\ybf'$. Note that we have cheated slightly here, since
$\mathcal{D}$ is not actually a probability distribution. If $\| \mathcal{D}
\|_1 = c < 1$, then we can think of this as Arthur having a $1 - c$ probability
of not sending any questions to the Merlins and awarding them payoff $0$.

The strategies $\xbf$ and $\ybf$ can also be used to give a us a strategy for
the Merlins in $\txy$. Without loss of generality, we can assume that for each
question $x \in X$ there is exactly one answer $a \in A$ such that $\xbf(x, a) >
0$, because if there are two answers $a_1$ and $a_2$ such that $\xbf(x,
a_1) > 0$ and $\xbf(x, a_2) > 0$, then we can shift all probability onto the
answer with (weakly) higher payoff, and (weakly) improve the payoff to the
row player. Since
$(R, C)$ is cooperative, this can only improve the payoff of the columns in $(R,
C)$, and since the row player does not move probability between questions, the
payoff of the columns in $(-D_2, D_2)$ does not change either. Thus, after
shifting, we arrive at a $(1 - 4g \cdot \delta)$-UNE of $\mathcal{G}$
whose social welfare is at least as good as $\sw(\xbf, \ybf)$. Similarly,
we can assume that for each question $y \in Y$ there is exactly one answer $b
\in B$ such that $\ybf(y, b) > 0$. 

So, we can define a strategy $s_\xbf$ for \mone in the following way. For each
question $x \in X$, the strategy $s_\xbf$ selects the unique answer $a \in A$
such that~$\xbf(x, a) > 0$. The strategy $s_\ybf$ for \mtwo is defined
symmetrically.  

We will use $\txy$ as an intermediary between $\mathcal{G}$ and $\fcal$ by
showing
that the payoff of $(\xbf, \ybf)$ in $\mathcal{G}$ is close to the
payoff of $(s_\xbf, s_\ybf)$ in $\txy$, and that the payoff of
$(s_\xbf, s_\ybf)$ in $\txy$ is close to the payoff of $(s_\xbf, s_\ybf)$ in
\fcal. Since we have a bound on the payoff of any pair of strategies in \fcal,
this will ultimately allow us to bound the payoff to both players when $(\xbf,
\ybf)$ is played in $\mathcal{G}$.

\paragraph{\bf Relating $\mathcal{G}$ to \txy}
For notational convenience, let us define $p_r(\mathcal{G}, \xbf, \ybf)$ and
$p_c(\mathcal{G}, \xbf, \ybf)$ to be the payoff to the row player and column
player, respectively, when $(\xbf, \ybf)$ is played in $\mathcal{G}$. We begin
by showing that the difference between $p_r(\mathcal{G}, \xbf, \ybf)$ and
$p(\txy, s_\xbf, s_\ybf)$ is small. Once again we prove this for the payoff of
the row player, but the analogous result also holds for the 
column player.

\begin{lemma}
\label{lem:pfcomp}
We have 
$| p_r(\mathcal{G}, \xbf, \ybf) - p(\txy, s_\xbf, s_\ybf) | \le 4 g \cdot
\delta.$
\end{lemma}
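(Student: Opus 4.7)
The plan is to decompose the row player's payoff $p_r(\mathcal{G}, \xbf, \ybf)$ into the four contributions arising from the four subgames that make up $\mathcal{G}$, and to argue that (i) the $(R,C)$ contribution exactly matches $p(\txy, s_\xbf, s_\ybf)$, (ii) the $(0,0)$ contribution vanishes, and (iii) the two zero-sum blocks contribute at most $4 g \cdot \delta$ in total, because the probability mass they receive is tiny by Lemma~\ref{lem:problp}.

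More concretely, write $\alpha = \|\xbf'\|_1$ for the probability $\xbf$ places on the top rows (indexed by $X \times A$) and $\beta = \|\ybf'\|_1$ for the probability $\ybf$ places on the left columns (indexed by $Y \times B$). Lemma~\ref{lem:problp} gives $\alpha, \beta \ge 1 - g \cdot \delta$, so both $1-\alpha$ and $1-\beta$ are at most $g \cdot \delta$. By our earlier single-support assumption, for each $x$ with $\xbf'(x) > 0$ there is a unique $a = s_\xbf(x)$ with $\xbf(x,a) = \xbf'(x)$, and symmetrically for $\ybf$; hence
\begin{equation*}
\sum_{(x,a),(y,b)} \xbf(x,a)\,\ybf(y,b)\, R_{(x,a),(y,b)}
= \sum_{x,y} \xbf'(x)\,\ybf'(y)\, V(x,y,s_\xbf(x),s_\ybf(y))
= p(\txy, s_\xbf, s_\ybf),
\end{equation*}
since the distribution of $\txy$ is by definition the (sub-)product $\xbf' \times \ybf'$. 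This identifies the $(R,C)$-block contribution exactly with $p(\txy, s_\xbf, s_\ybf)$.

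The remaining contributions come from the top-right block $-D_2$ (mass $\alpha(1-\beta) \le g \cdot \delta$, entries in $[-\tfrac{4}{1+4g\delta},\,0]$) and the bottom-left block $D_1$ (mass $(1-\alpha)\beta \le g \cdot \delta$, entries in $[0,\,\tfrac{4}{1+4g\delta}]$); the bottom-right $0$ block contributes nothing. Since these two blocks have entries of opposite sign from the row player's viewpoint, and since $\tfrac{4}{1+4g\delta} < 4$, we conclude
\begin{equation*}
-\,\alpha(1-\beta)\cdot\tfrac{4}{1+4g\delta}
\;\le\; p_r(\mathcal{G},\xbf,\ybf) - p(\txy, s_\xbf, s_\ybf) \;\le\;
(1-\alpha)\beta\cdot\tfrac{4}{1+4g\delta},
\end{equation*}
and both extremes are strictly bounded in magnitude by $4 g \cdot \delta$, yielding the desired inequality.

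The argument is essentially bookkeeping, so there is no real obstacle; the only subtle point is recognising that the $(R,C)$ contribution is literally $p(\txy,s_\xbf,s_\ybf)$, which is where the choice of $\txy$'s distribution as the product of marginals $\xbf'$ and $\ybf'$ pays off. The bound for the column player follows by an identical argument with the roles of $(D_1,-D_1)$ and $(-D_2,D_2)$ swapped.
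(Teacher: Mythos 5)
Your proof is correct and takes essentially the same block-decomposition approach as the paper: identify the $(R,C)$ contribution with $p(\txy, s_\xbf, s_\ybf)$ using the single-answer-per-question assumption, and bound the off-block contributions by $4g\cdot\delta$ via Lemma~\ref{lem:problp} and the fact that payoffs in $D_1$ and $-D_2$ are bounded in magnitude by $\frac{4}{1+4g\delta} < 4$. If anything, your write-up is more careful than the paper's: the paper asserts $p(\txy, s_\xbf, s_\ybf) \le p_r(\mathcal{G}, \xbf, \ybf)$ without comment, which overlooks the fact that the row player's $-D_2$ contribution is nonpositive and can drive $p_r$ below $p(\txy, s_\xbf, s_\ybf)$; your two-sided bound, which separately controls the negative $-D_2$ mass and the positive $D_1$ mass, handles this cleanly and still delivers the $4g\cdot\delta$ absolute-value bound.
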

\begin{proof}
By construction, $p(\txy, s_\xbf, s_\ybf)$ is equal to the payoff that the row
player obtains from the subgame $(R, C)$, and so we have $p(\txy, s_\xbf,
s_\ybf) \le p_r(\mathcal{G}, \xbf, \ybf) $. On the other hand, since the row
player places at most $g \cdot \delta$ probability on rows not in $(R, C)$, and
since these rows have payoff at most $\frac{4}{1 + 4g \cdot \delta} < 4$, we
have $ p_r(\mathcal{G}, \xbf, \ybf) \le p(\txy, s_\xbf, s_\ybf) + 4g \cdot
\delta$. \qed
\end{proof}

\paragraph{\bf Relating $\txy$ to $\fcal$}

First we show that if $(\xbf, \ybf)$ is indeed a $(1 - 4g \cdot
\delta)$-UNE, then  $\xbf'$ and $\ybf'$ must be close to uniform over the
questions. We prove this for~$\ybf'$, but the proof can equally well be
applied to $\xbf'$. The idea is that, if $\ybf'$ is sufficiently far
from uniform, then there is  set $B \subseteq Y$ of $|Y|/2$ columns where
$\ybf'$ places significantly more than $0.5$ probability. This, in turn, means
that the row of $(D_1, -D_1)$ that corresponds to $B$, will have payoff at least
$2$, while the payoff of $(\xbf, \ybf)$ can be at most $1 + 3g \cdot \delta$,
and so $(\xbf, \ybf)$ would not be a $(1 - 4g \cdot \delta)$-UNE. We formalise
this idea in the following lemma.
Define $\ubf_X$ to be the uniform distribution over $X$, and $\ubf_Y$ to be the
uniform distribution over $Y$.

\begin{lemma}
\label{lem:dbound}
We have
$\| \ubf_Y - \ybf' \|_1 < 16 g \cdot \delta$ 
and $\| \ubf_X - \xbf' \|_1 < 16 g \cdot \delta$.
\end{lemma}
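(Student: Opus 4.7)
The plan is to argue by contrapositive: assume $\|\ubf_Y - \ybf'\|_1 \ge 16 g \cdot \delta$ and exhibit a pure row of $(D_1,-D_1)$ whose payoff exceeds the row player's current payoff by more than $1 - 4g\delta$, contradicting that $(\xbf, \ybf)$ is a $(1-4g\delta)$-UNE. The bound for $\xbf'$ then follows by a symmetric argument that swaps the roles of the two players and uses $(-D_2, D_2)$ in place of $(D_1, -D_1)$.

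First I would upper bound the row player's current payoff $P_r$. By Lemma~\ref{lem:problp}, $\xbf$ places at most $g\delta$ mass on rows of $(D_1, -D_1)$. The $(R,C)$ subgame contributes at most $1$, the $(-D_2, D_2)$ subgame contributes non-positively for the row player, and the $(D_1, -D_1)$ subgame contributes at most $g\delta \cdot \tfrac{4}{1+4g\delta} < 4g\delta$, which gives $P_r \le 1 + 4g\delta$. Now let $B \subseteq Y$ with $|B| = |Y|/2$ be the top half of the questions by $\ybf'$-mass, and consider the row $f_B \in H$ with $f_B(y) = 1$ iff $y \in B$. Playing $f_B$ against $\ybf$ yields the row player payoff exactly $\tfrac{4}{1+4g\delta} \sum_{y \in B} \ybf'(y)$ (rows in $(0,0)$ contribute nothing), so the $(1-4g\delta)$-UNE property forces this payoff to be at most $P_r + 1 - 4g\delta \le 2$, i.e.\
\begin{equation*}
\sum_{y \in B} \ybf'(y) \;\le\; \frac{1 + 4g\delta}{2} \;=\; \frac{1}{2} + 2g\delta.
\end{equation*}

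The main obstacle is a combinatorial claim: whenever $\ybf'$ is a sub-probability on $|Y|$ elements with $\|\ybf'\|_1 \ge 1 - g\delta$ and $\|\ubf_Y - \ybf'\|_1 \ge 16g\delta$, the top-half mass $\sum_{y \in B} \ybf'(y)$ must be at least $\tfrac{1}{2} + 3.25\, g\delta$, contradicting the bound above since $3.25 > 2$. To prove this I would set $P_+ = \sum_{y : \ybf'(y) > 1/|Y|}(\ybf'(y) - 1/|Y|)$ and $N_- = \sum_{y : \ybf'(y) < 1/|Y|}(1/|Y| - \ybf'(y))$. The identities $P_+ + N_- = \|\ubf_Y - \ybf'\|_1$ and $N_- - P_+ = 1 - \|\ybf'\|_1$ immediately yield $P_+ \ge (16g\delta - g\delta)/2 = 7.5\,g\delta$. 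Sorting so that $\ybf'(y_1) \ge \cdots \ge \ybf'(y_{|Y|})$ and writing $\Delta = \sum_{i \le |Y|/2} \ybf'(y_i) - \sum_{i > |Y|/2} \ybf'(y_i)$, a short case analysis on whether the number $k$ of entries strictly above uniform is at most or more than $|Y|/2$ shows $\Delta \ge P_+$; in the harder case one uses the sorted order together with the elementary inequality $\tfrac{|Y|/2 - k}{|Y| - k} \le \tfrac{1}{2}$ to bound the contribution of the least-negative tail. Since $\sum_{y \in B}\ybf'(y) = (\|\ybf'\|_1 + \Delta)/2 \ge (1 - g\delta + 7.5\,g\delta)/2 = \tfrac{1}{2} + 3.25\,g\delta$, the desired contradiction follows.
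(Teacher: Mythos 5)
Your proof is correct and follows essentially the same strategy as the paper: argue by contrapositive, upper-bound the row player's payoff under $(\xbf,\ybf)$, and exhibit the row $f_B$ of $(D_1,-D_1)$ corresponding to the heaviest half of $Y$ as a profitable deviation that violates the $(1-4g\delta)$-UNE property. The only substantive difference is in the combinatorial step: the paper isolates an auxiliary lemma (Lemma~\ref{lem:subset}) giving top-half mass greater than $\tfrac{1}{2}+\tfrac{c}{4}-2g\delta$, while you work directly with $P_+$, $N_-$, and the sorted order to obtain the slightly sharper lower bound $\tfrac{1}{2}+3.25\,g\delta$, which in turn lets you get away with the slightly looser upper bound $1+4g\delta$ on the current payoff (the paper uses $1+3g\delta$); the net effect is the same contradiction.
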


We begin by proving an auxiliary lemma.

\begin{lemma}
\label{lem:subset}
If $\| \ubf_Y - \ybf' \|_1 \ge c$
then there exists a set $B \subseteq Y$ of size $|Y|/2$ such that 
\begin{equation*}
\sum_{i \in B} \ybf'_i > \frac{1}{2} + \frac{c}{4} - 2 g \cdot \delta.
\end{equation*}
\end{lemma}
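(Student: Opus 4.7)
The plan is to reduce the problem to a clean statement about probability distributions, prove that clean statement by case analysis on how much mass is ``above average,'' and then translate the bound back to $\ybf'$ while paying only an $O(g\delta)$ price for the fact that $\ybf'$ is sub-stochastic rather than stochastic.

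First I would normalise. Let $s = \|\ybf'\|_1$; by the running hypothesis of Section~4.3, $s \ge 1 - g\delta$. Define the probability distribution $\tilde\ybf = \ybf'/s$ on $Y$. The triangle inequality gives
\begin{equation*}
\|\ybf' - s\ubf_Y\|_1 \;\ge\; \|\ybf' - \ubf_Y\|_1 - \|(1-s)\ubf_Y\|_1 \;\ge\; c - g\delta,
\end{equation*}
and since $\tilde\ybf - \ubf_Y = (1/s)(\ybf' - s\ubf_Y)$, we deduce $\|\tilde\ybf - \ubf_Y\|_1 \ge (c-g\delta)/s \ge c - g\delta$. So it suffices to understand the top-half mass of an arbitrary probability distribution in terms of its $\ell_1$ distance from uniform.

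Next I would prove the key combinatorial fact: for any probability distribution $\mu$ on $Y$ (with $|Y|$ even), if $B$ is a set of $|Y|/2$ elements with the largest $\mu$-values, then $\mu(B) \ge \tfrac12 + \tfrac14\|\mu - \ubf_Y\|_1$. Write $u = 1/|Y|$, $P = \{y : \mu(y) \ge u\}$, $N = Y \setminus P$, and note $\sum_P(\mu(y)-u) = \sum_N(u-\mu(y)) = \tfrac12\|\mu-\ubf_Y\|_1$. Case (i), $|P| \ge |Y|/2$: then $B \subseteq P$, and because $B$ takes the $|Y|/2$ largest excesses out of $|P|$ non-negative ones, its share of the total excess is at least $|Y|/(2|P|) \ge 1/2$, giving $\mu(B) \ge \tfrac12 + \tfrac14\|\mu-\ubf_Y\|_1$. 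Case (ii), $|P| < |Y|/2$: then $B \supseteq P$, and the complement $B^c \subseteq N$ consists of the $|Y|/2$ smallest values in $N$, so by the same averaging argument its share of the total deficit is at most $|Y|/(2|N|) \le 1/2$, giving the same bound. This case split is the main step of the argument; everything else is bookkeeping.

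Finally, applying the combinatorial fact to $\tilde\ybf$ and using that the top-half set $B$ of $\tilde\ybf$ coincides with the top-half set of $\ybf'$, I get $\sum_{y \in B} \ybf'(y) = s \cdot \tilde\ybf(B) \ge s\bigl(\tfrac12 + \tfrac{c-g\delta}{4}\bigr)$. Expanding with $s \ge 1 - g\delta$ and using the crude bound $c \le \|\ybf'\|_1 + \|\ubf_Y\|_1 \le 2$, the cross-terms cost at most a further $(3/4)g\delta$, so $\sum_{y\in B}\ybf'(y) \ge \tfrac12 + \tfrac{c}{4} - \tfrac{5}{4}g\delta > \tfrac12 + \tfrac{c}{4} - 2g\delta$, which is the desired conclusion.
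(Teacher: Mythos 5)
Your approach is the same in spirit as the paper's: both split $Y$ into the part above and below uniform mass, observe that each side carries roughly half the $\ell_1$ discrepancy, and case-split on whether the ``above'' part has at least $|Y|/2$ elements, applying an averaging bound in each case. You package this more cleanly by first normalizing $\ybf'$ to a genuine probability distribution, stating a slack-free combinatorial lemma, and absorbing the sub-stochasticity of $\ybf'$ only at the two translation steps; this isolates the $g\delta$ bookkeeping from the case analysis and even gives a slightly sharper constant ($\tfrac{5}{4}g\delta$ rather than $2g\delta$).

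There is, however, a sign slip in Case~(ii) of your intermediate lemma. You assert that the share of the total deficit carried by $B^c$ is ``at most $|Y|/(2|N|) \le 1/2$,'' but both inequalities run the wrong way. Since $B^c$ collects the $|Y|/2$ elements of $N$ with the \emph{largest} deficits, the averaging bound, applied exactly as in Case~(i), shows that its share is \emph{at least} $|Y|/(2|N|)$, and $|N| \le |Y|$ gives $|Y|/(2|N|) \ge 1/2$. The ``at least $1/2$'' direction is what you need: since $\mu(B) = \tfrac12 + \sum_{y\in B^c}(u - \mu(y))$, a lower bound of $\tfrac12$ on the fraction of the total deficit $\tfrac12\|\mu - \ubf_Y\|_1$ captured by $B^c$ is exactly what yields $\mu(B) \ge \tfrac12 + \tfrac14\|\mu - \ubf_Y\|_1$. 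As written, your sentence would give an upper bound on $\mu(B)$, not a lower bound; the conclusion you draw is correct, but the stated inequality is not.
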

\begin{proof}
We first define $\dbf = \ybf' - \ubf_Y$, and then we partition $Y$ as follows
\begin{align*}
U &= \{ y \in Y \; : \; \dbf_y > \frac{1}{|Y|} \}, \\
L &= \{ y \in Y \; : \; \dbf_y \le \frac{1}{|Y|} \}.
\end{align*}
Since $\| \ybf' \|_1 \ge 1 - g \cdot \delta$ and $\| \ubf \|_1 = 1$, we have
that 
\begin{align*}
\sum_{y \in U} \dbf_y &\ge c/2 - g \cdot \delta, \\
\sum_{y \in L} \dbf_y &\le -c/2 + g \cdot \delta.
\end{align*}
We will prove that there exists a set $B \subseteq Y$  of size $|Y|/2$
such that $\sum_{y \in B} \dbf_y \ge c/4 - g \cdot \delta$. 

We have two cases to consider, depending on the size of $U$.
\begin{itemize}
\item First suppose that $|U| > |Y|/2$. If this is the case, then
there must exist a set $B \subseteq U$ with $|B| = |U|/2$ and
$\sum_{i \in B} \dbf_i \ge c/4 - g \cdot \delta$. We can then add arbitrary
columns from $U \setminus B$ to $B$ in order to make $|B| = |Y|/2$, and
since $\dbf_i > 0$ for all $i \in U$, this cannot decrease $\sum_{i \in
B} \dbf_i$. Thus, we have completed the proof for this case.

\item Now suppose that $|U| \le |Y|/2$. If this is the case, then there must
exist a set $C \subseteq L$  with $|C| = |L|/2$ and $\sum_{i \in C} \dbf_i \ge
-\frac{c}{4} + g \cdot \delta$. 
So, let $C' \subseteq C$ be an arbitrarily chosen subset such
that $|C'| + |U| = |Y|/2$. This is possible since $|L| = |Y| - |U|$  and hence 
$|L|/2 = |Y|/2 - |U|/2$, which implies that $|L|/2 + |U| > |Y|/2$. Setting $B = C'
\cup U$ therefore gives us a set with $|B| = |Y|/2$ such that
\begin{align*}
\sum_{i \in B} \dbf_i &\ge \left(c/2 - g \cdot \delta\right) - \left(c/4 + g \cdot \delta
\right) \\
&= c/4 - 2 \cdot g \cdot \delta\ .
\end{align*}
So we have completed the proof of this case, and the lemma as a whole. \qed
\end{itemize}

\end{proof}

We can now proceed with the proof of Lemma~\ref{lem:dbound}.
\begin{proof}[Proof of Lemma~\ref{lem:dbound}]
Suppose, for the sake of contradiction that one of these two properties
fails. Without loss of generality, let us assume that 
$\| \ubf_Y - \ybf' \|_1 \ge c$. We will show that the row player can gain more
than $1$ in payoff by deviating to a new strategy, which will show that
$(\xbf, \ybf)$ is not a~$1$-UNE, contradicting our assumption that it is 
a $(1 - 4 g \cdot \delta)$-UNE.

By assumption, $\xbf$ places at least $1 - g \cdot \delta$ probability on rows
in $(R, C)$. The maximum payoff in $R$ is $1$, and the maximum payoff in $-D_2$
is $0$. On the one hand, the rows in $D_2$ give payoff at most $8/(2 + g \cdot
\delta) \le 4$. 
So the row player's payoff under $(\xbf, \ybf)$ is bounded by 
\begin{equation*}
\left(1 - g \cdot \delta\right) \cdot 1 + \left(g \cdot \delta\right) \cdot  4 = 1 + 3 g \cdot
\delta.
\end{equation*}

On the other hand, we can apply Lemma~\ref{lem:subset} with $c = 16 g \cdot
\delta$ to find a set
$B \subseteq Y$ such that 
\begin{align*}
\sum_{i \in B} \ybf'(i) &> \frac{1}{2} + \frac{16g \cdot \delta}{4} - 2 g \cdot
\delta. \\
&= \frac{1}{2} + 2 g \cdot \delta \\
&= \frac{1 + 4g \cdot \delta}{2}.
\end{align*}
So, let $r_B$ be the row of $D_1$ that corresponds to $B$. This row has payoff
$\frac{8}{2 + g \cdot \delta}$ for every entry in $B$. So, the payoff of row
$r_B$ must be at least
\begin{equation*}
\left( \frac{1 + 4g \cdot \delta}{2} \right) \cdot \left(
\frac{4}{1 + 4g \cdot \delta} \right) = 2.
\end{equation*}
Thus, the row player can deviate to $r_B$ and increase his payoff by at least $1
- 3 g \cdot \delta$, and $(\xbf, \ybf)$ is not a $(1 - 4 g \cdot \delta)$-UNE.
  \qed
\end{proof}

With Lemma~\ref{lem:dbound} at hand, we can now prove that the difference
between $p(\txy, s_\xbf, s_\ybf)$ and $p(\fcal, s_\xbf, s_\ybf)$ must be small.
This is because the question distribution $\mathcal{D}$ used in \txy is a
product of two distributions that are close to uniform, while the question
distribution $\mathcal{U}$ used in \fcal is a product of two uniform
distributions. In the following lemma, we show that if we transform
$\mathcal{D}$ into $\mathcal{U}$, then we do not change the payoff of $(s_\xbf,
s_\ybf)$ very much.

\begin{lemma}
\label{lem:tfcomp}
We have
$| p(\txy, s_\xbf, s_\ybf) -  p(\fcal, s_\xbf, s_\ybf) | \le 64 g \cdot \delta.$
\end{lemma}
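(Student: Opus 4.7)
The plan is to exploit the fact that both $\txy$ and $\fcal$ share the same question sets, answer sets and verification function $V$, and that the Merlins play the same (deterministic) strategies $s_\xbf, s_\ybf$ in both games. The only thing that changes between the two games is the question distribution: $\txy$ uses $\mathcal{D} = \xbf' \otimes \ybf'$ while $\fcal$ uses the uniform distribution $\ucal = \ubf_X \otimes \ubf_Y$. So I want to bound the payoff difference by how far $\mathcal{D}$ is from $\ucal$ in $\ell_1$ distance, and then use Lemma~\ref{lem:dbound} to bound that distance.

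Concretely, I would write
\begin{equation*}
p(\txy, s_\xbf, s_\ybf) - p(\fcal, s_\xbf, s_\ybf)
= \sum_{(x,y) \in X \times Y} \big(\mathcal{D}(x,y) - \ucal(x,y)\big) \cdot V(x, y, s_\xbf(x), s_\ybf(y)).
\end{equation*}
Since $V$ takes values in $\{0,1\}$, this is bounded in absolute value by $\|\mathcal{D} - \ucal\|_1 = \|\xbf' \otimes \ybf' - \ubf_X \otimes \ubf_Y\|_1$. (The slight subtlety that $\mathcal{D}$ is not a true probability distribution when $\|\xbf'\|_1$ or $\|\ybf'\|_1$ is strictly less than $1$ is harmless: the $(1-\|\mathcal{D}\|_1)$ mass not assigned to any question pair contributes payoff $0$ to both sides of the identity, exactly as described when $\txy$ was introduced.)

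The remaining step is a standard triangle-inequality bound on the $\ell_1$ distance between product measures:
\begin{align*}
\|\xbf' \otimes \ybf' - \ubf_X \otimes \ubf_Y\|_1
&\le \|\xbf' \otimes \ybf' - \xbf' \otimes \ubf_Y\|_1 + \|\xbf' \otimes \ubf_Y - \ubf_X \otimes \ubf_Y\|_1 \\
&= \|\xbf'\|_1 \cdot \|\ybf' - \ubf_Y\|_1 + \|\ubf_Y\|_1 \cdot \|\xbf' - \ubf_X\|_1.
\end{align*}
Both $\|\xbf'\|_1$ and $\|\ubf_Y\|_1$ are at most $1$, and Lemma~\ref{lem:dbound} gives $\|\ybf' - \ubf_Y\|_1 < 16 g \cdot \delta$ and $\|\xbf' - \ubf_X\|_1 < 16 g \cdot \delta$. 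Putting the pieces together yields a bound of $32 g \cdot \delta$, which is comfortably within the claimed $64 g \cdot \delta$. There is no real obstacle here; the lemma is an essentially mechanical consequence of Lemma~\ref{lem:dbound}, and the only thing worth being careful about is the bookkeeping for the ``missing'' probability mass when $\xbf'$ or $\ybf'$ sums to less than $1$.
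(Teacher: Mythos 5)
Your proof is correct and follows the same conceptual route as the paper: bound the payoff difference by the $\ell_1$ distance between the two product question distributions, then use Lemma~\ref{lem:dbound}. The paper executes this less cleanly: it ``transforms'' $\ubf_X$ into $\xbf'$ by shifting probability between questions, asserting that shifting $q$ probability can change the payoff by at most $2q$, and so obtains $32 g \cdot \delta$ per coordinate and $64 g \cdot \delta$ in total. Your standard triangle-inequality factorisation of $\|\xbf' \otimes \ybf' - \ubf_X \otimes \ubf_Y\|_1$ avoids that loose factor of~$2$ (since $V \in \{0,1\}$, the payoff difference is bounded by the $\ell_1$ distance with no extra constant), giving the strictly better bound of $32 g \cdot \delta$. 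Your handling of the sub-probability mass (the case $\|\xbf'\|_1 < 1$ or $\|\ybf'\|_1 < 1$, where the missing mass contributes payoff~$0$) matches the convention the paper set up when $\txy$ was defined, and the identity $\|\xbf' \otimes \ybf' - \xbf' \otimes \ubf_Y\|_1 = \|\xbf'\|_1 \cdot \|\ybf' - \ubf_Y\|_1$ remains valid for such sub-probability vectors, so there is no gap. In short: same idea, cleaner bookkeeping, and a factor-of-two improvement over what the lemma actually claims.
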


\begin{proof}
The distribution used in \fcal is the product of $\ubf_Y$ and $\ubf_X$, while the
distribution used in \txy is the product of $\ybf$' and $\xbf'$. Furthermore,
Lemma~\ref{lem:dbound} tells us that $\| \ubf_Y - \ybf' \|_1 < 16 g \cdot
\delta$ and $\| \ubf_X - \xbf' \|_1 < 16 g \cdot \delta$. Our approach is to
transform $\ubf_X$ to $\xbf'$ while bounding the amount that $p(\fcal, s_\xbf,
s_\ybf)$
changes. Once we have this, we can apply the same transformation to $\ubf_Y$ and
$\ybf'$.

Consider the effect of shifting probability from a question $x_1 \in X$ to a
different question $x_2 \in X$. Since all entries of $V$ are in $\{0, 1\}$, if
we shift~$q$ probability from $x_1$ to $x_2$, then $p(\fcal, s_\xbf, s_\ybf)$ can
change by at most $2q$. This bound also holds if we remove probability from
$x_1$ without adding it to $x_2$ (which we might do since $\| \xbf \|_1$ may not be
$1$.) Thus, if we shift probability to transform $\ubf_X$ into $\xbf'$, then we
can change $p(\fcal, s_\xbf, s_\ybf)$ by at most $ 32 g \cdot \delta$.

The same reasoning holds for transforming $\ubf_Y$ into $\ybf'$. This means that
we can transform \fcal to \txy while changing the payoff of $(s_\xbf, s_\ybf)$ by at
most $64 g \cdot \delta$, which completes the proof. \qed
\end{proof}

\paragraph{\bf Completing the soundness proof} 

The following lemma uses the bounds derived in Lemmas~\ref{lem:pfcomp}
and~\ref{lem:tfcomp}, along with a suitable setting for~$g$, to bound the payoff
of both players when $(\xbf, \ybf)$ is played in $\mathcal{G}$.

\begin{lemma}
\label{lem:sound}
If $g = \frac{1}{138}$, then both players have payoff at most $1 - g \cdot
\delta$ when $(\xbf, \ybf)$ is played in $\mathcal{G}$.
\end{lemma}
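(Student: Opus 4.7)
The plan is to bound each player's payoff in $\mathcal{G}$ by chaining Lemmas~\ref{lem:pfcomp} and~\ref{lem:tfcomp} with the soundness hypothesis $\omega(\fcal) < 1 - \delta/2$, and then to pick the largest $g$ that lets us absorb the losses on the right-hand side into a $g \delta$ slack.

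For the row player, combining the two distance bounds gives
\[
p_r(\mathcal{G}, \xbf, \ybf) \;\le\; p(\txy, s_\xbf, s_\ybf) + 4 g \delta \;\le\; p(\fcal, s_\xbf, s_\ybf) + 68\, g \delta.
\]
Since $(s_\xbf, s_\ybf)$ is a particular strategy profile in $\fcal$, the definition of $\omega$ and the soundness hypothesis of this subsection give $p(\fcal, s_\xbf, s_\ybf) \le \omega(\fcal) < 1 - \delta/2$. Hence $p_r(\mathcal{G}, \xbf, \ybf) < 1 - \delta/2 + 68\, g \delta$. Demanding that this is at most $1 - g \delta$ reduces to $69 g \le 1/2$, which is tight at $g = \tfrac{1}{138}$, so $p_r(\mathcal{G}, \xbf, \ybf) \le 1 - g \delta$ follows.

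For the column player I would run the identical argument. The construction of $\mathcal{G}$ is symmetric in the required way: $(-D_2, D_2)$ plays the same role for the column player that $(D_1, -D_1)$ plays for the row player, and the cooperative nature of $(R, C)$ means that the same scalar $p(\fcal, s_\xbf, s_\ybf)$ controls both sides. The symmetric counterparts of Lemmas~\ref{lem:pfcomp} and~\ref{lem:tfcomp} therefore hold with the same constants $4$ and $64$, and the same chain of inequalities yields $p_c(\mathcal{G}, \xbf, \ybf) \le 1 - g \delta$.

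There is no real obstacle here; this lemma is the final accounting step of the soundness argument, and the preceding lemmas have already done the substantive work. The only thing that requires care is verifying that the constants $4$ from Lemma~\ref{lem:pfcomp}, $64$ from Lemma~\ref{lem:tfcomp}, and the gap $\delta/2$ from the soundness hypothesis combine to yield exactly $69 g \le 1/2$, which is precisely what pins $g$ down at $\tfrac{1}{138}$.
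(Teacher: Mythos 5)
Your proof is correct and follows the same route as the paper's: chain Lemma~\ref{lem:pfcomp} and Lemma~\ref{lem:tfcomp} to bound $|p_r(\mathcal{G}, \xbf, \ybf) - p(\fcal, s_\xbf, s_\ybf)|$ by $68g\delta$, invoke the soundness assumption $p(\fcal, s_\xbf, s_\ybf) \le \omega(\fcal) < 1 - \delta/2$, and solve $69g \le 1/2$ to pin down $g = \tfrac{1}{138}$, with the column player handled by the symmetric argument. This matches the paper's reasoning and arithmetic exactly.
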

\begin{proof}
Lemmas~\ref{lem:pfcomp} and~\ref{lem:tfcomp} tell us that
\begin{align*}
| \; p_r(\mathcal{G}, \xbf, \ybf) - p(\txy, s_\xbf, s_\ybf) | &\le 4g \cdot \delta, \\
| \; p(\txy, s_\xbf, s_\ybf) - p(\fcal, s_\xbf, s_\ybf) | &\le 64g \cdot \delta. 
\end{align*}
Hence, we have
$| \; p_r(\mathcal{G}, \xbf, \ybf) - p(\fcal, s_\xbf, s_\ybf) | \le 68g \cdot
\delta$. However, we know that $p(\fcal, s_\xbf, s_\ybf) \le 1 - \delta/2$. So,
if we set $g = \frac{1}{138}$, then we we will have that
\begin{align*}
p_r(\mathcal{G}, \xbf, \ybf) &\le 1 - \frac{1}{2} \cdot \delta + \frac{68}{138} \cdot
\delta \\
&= 1 - \frac{1}{138} \cdot \delta\\
&= 1 - g \cdot \delta.
\end{align*}
\qed
\end{proof}

\noindent Hence, we have proved that $\sw(\xbf, \ybf) \le 2 - 2g \cdot \delta$.

\subsection{The result}

We can now state the theorem that we have proved in this section.
We first rescale the game so that it lies in $[0, 1]$.
The maximum payoff in $\mathcal{G}$ is $\frac{4}{1 + 4g \cdot
\delta} \le 4$, and the minimum payoff is $- \frac{4}{1 + 4g \cdot \delta}
\ge -4$. To rescale this game, we add $4$ to all the payoffs, and then divide by
$8$. Let us refer to the scaled game as $\mathcal{G}_s$. Observe that an
$\eps$-UNE in $\mathcal{G}$ is a $\frac{\eps}{8}$-NE in $\mathcal{G}_s$ since
adding a constant to all payoffs does not change the approximation guarantee, but
dividing all payoffs by a constant does change the approximation guarantee. So,
we have the following theorem.

\begin{theorem}
\label{thm:main}
If ETH holds, then there exists a constant $\delta$ below which
the problem $(\frac{1 - 4 g \cdot \delta}{8})$-NE
$(\frac{g}{4}\cdot \delta)$-SW, where $g = \frac{1}{138}$, requires 
$n^{\widetilde \Omega(\log n)}$ time.
\end{theorem}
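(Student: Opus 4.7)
The plan is to combine Lemma~\ref{lem:complete} (Completeness) and Lemma~\ref{lem:sound} (Soundness) with a straightforward rescaling step, and then appeal to Theorem~\ref{thm:qplower} for the quasi-polynomial lower bound. Starting from a 3SAT instance $\phi$ of size $n$, I would invoke Theorem~\ref{thm:qplower} to obtain a free game $\mathcal{F}$ with $|\mathcal{F}|=n^{O(1)}$ and question sets of size $O(\log |\mathcal{F}|)$, for which deciding $\omega(\mathcal{F})=1$ versus $\omega(\mathcal{F})<1-\delta$ (for the small fixed $\delta$ from that theorem) cannot be done faster than $|\mathcal{F}|^{\widetilde{\Omega}(\log|\mathcal{F}|)}$. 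I would then build $\mathcal{G}$ using the construction of this section; crucially, the logarithmic question sets are what keep the zero-sum blocks $(D_1,-D_1)$ and $(-D_2,D_2)$ at size $2^{O(\log|\mathcal{F}|)}=|\mathcal{F}|^{O(1)}$, so that $|\mathcal{G}|=|\mathcal{F}|^{O(1)}$ and the reduction runs in polynomial time.

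Next I would handle the rescaling. The payoffs of $\mathcal{G}$ lie in $[-4,4]$, so shifting by $+4$ and dividing by $8$ gives $\mathcal{G}_s$ with payoffs in $[0,1]$. A uniform additive shift leaves regrets invariant and the division by $8$ scales them by $1/8$, so an $\eps$-UNE in $\mathcal{G}$ is exactly an $(\eps/8)$-NE in $\mathcal{G}_s$; likewise the social welfares are related by $\sw_s=(\sw+8)/8$. Setting $\eps^*=1-4g\cdot\delta$ and applying Lemmas~\ref{lem:complete} and~\ref{lem:sound}, this gives $\bsw(\mathcal{G}_s,\eps^*/8)=5/4$ in the yes case and $\bsw(\mathcal{G}_s,\eps^*/8)<5/4-g\cdot\delta/4$ in the no case, a gap of exactly $g\cdot\delta/4$.

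Now, suppose for contradiction that $\bigl(\tfrac{\eps^*}{8}\bigr)$-NE $\bigl(\tfrac{g\cdot\delta}{4}\bigr)$-SW admits an algorithm running in time $N^{o(\log N/(\log\log N)^c)}$ for every constant $c$. Running it on $\mathcal{G}_s$ returns an $(\eps^*/8)$-NE whose social welfare is within $g\cdot\delta/4$ of $\bsw(\mathcal{G}_s,\eps^*/8)$; by the previous paragraph this value is at least $5/4-g\cdot\delta/4$ in the yes case and strictly less in the no case, so a simple thresholding test decides $\omega(\mathcal{F})$. Because $|\mathcal{G}_s|=|\mathcal{F}|^{O(1)}$, the hypothetical algorithm would then yield a \freegame solver with running time $|\mathcal{F}|^{o(\log|\mathcal{F}|/(\log\log|\mathcal{F}|)^{c'})}$, contradicting Theorem~\ref{thm:qplower}.

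Essentially all the heavy lifting has already been done by the two lemmas and by Theorem~\ref{thm:qplower}; the remaining work is bookkeeping. The most delicate step I expect is checking that the polynomial size blowup of the reduction does not erode the exponent in $N^{\widetilde{\Omega}(\log N)}$, but since $\log|\mathcal{G}_s|=\Theta(\log|\mathcal{F}|)$ and $\log\log|\mathcal{G}_s|=\Theta(\log\log|\mathcal{F}|)$, the exponent survives with a constant factor that can be absorbed into the hidden $c$ in $\widetilde{\Omega}$.
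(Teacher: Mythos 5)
Your proposal is correct and follows essentially the same route as the paper's proof: apply Lemmas~\ref{lem:complete} and~\ref{lem:sound} to establish the completeness/soundness gap, rescale to $\mathcal{G}_s$ to translate UNE into NE and social welfare accordingly, and appeal to Theorem~\ref{thm:qplower} noting that $|\mathcal{G}_s|=|\mathcal{F}|^{O(1)}$ preserves the exponent. One small notational slip: you write $|\mathcal{F}|=n^{O(1)}$ where $n$ is the 3SAT instance size, but $\mathcal{F}_\phi$ has size $2^{\Theta(\sqrt{n\polylog n})}$; this does not affect the argument since the rest of your proof correctly works off the $|\mathcal{F}|^{\widetilde\Omega(\log|\mathcal{F}|)}$ lower bound and the polynomial relationship between $|\mathcal{G}_s|$ and $|\mathcal{F}|$.
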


\begin{proof}
By Lemmas~\ref{lem:complete} and~\ref{lem:sound}, we have 
\begin{itemize}
\item if $\omega(\mathcal{F}) = 1$ then there exists a 
$(1 - 4g \cdot \delta)$-UNE of $\mathcal{G}$ with social welfare $1 + 1 = 2$.
In the rescaled game this translates to a
$(\frac{1 - 4 g \cdot \delta}{8})$-NE of $\mathcal{G}_s$ with social welfare
$\frac{1+4}{8} + \frac{1+4}{8} = \frac{10}{8}$.
\item if $\omega(\mathcal{F}) < 1 - \delta$ then all
$(1 - 4 g \cdot \delta)$-UNE of $\mathcal{G}$ have social welfare at most $(1 - g
\cdot \delta) + (1 - g \cdot \delta) = 2 - 2g \cdot \delta$.
After rescaling, we have that all
$(\frac{1 - 4 g \cdot \delta}{8})$-NE of $\mathcal{G}_s$ have social welfare
social welfare at most 
\begin{equation*}
\frac{5 - g \cdot \delta}{8} + \frac{5 - g \cdot \delta}{8} = 
\frac{10}{8} - \frac{g \cdot \delta}{4}
\end{equation*}
\end{itemize}
By Theorem~\ref{thm:qplower}, assuming ETH we require 
$|\mathcal{F}|^{\widetilde{\Omega}(\log 
 |\mathcal{F}|)}$ time to decide whether the value of $\mathcal{F}$ is $1$ or $1
- \delta$ for some small constant $\delta$. Thus, we also require 
$n^{\widetilde{\Omega}(\log 
 |n|)}$ to solve the problem 
$(\frac{1 - 4 g \cdot \delta}{8})$-NE $(\frac{g}{4}\cdot \delta)$-SW. \qed

\end{proof}

\section{Hardness results for other decision problems}
\label{sec:problems}

In this section we study a range of decision problems associated with
approximate equilibria. Table~\ref{tab:problems} shows all of the decision
problems that we consider. Most are known to be \NP-complete for the case of
exact Nash equilibria~\cite{GZ89,CS08}. For each problem in
Table~\ref{tab:problems}, the input includes a bimatrix game and a quality of
approximation $\eps \in (0,1)$.  We consider decision problems related to both
\eps-NE and \eps-WSNE.  Since \eps-NE is a weaker solution concept than \eps-WSNE, 
i.e., every \eps-WSNE is an \eps-NE, the
hardness results for \eps-NE imply the same hardness for \eps-WSNE.  We consider
problems for \eps-WNSE only where the corresponding problem for \eps-NE is
trivial. For example,  observe that deciding if there is an \eps-NE with large
support is a trivial problem, since we can always add a tiny amount of
probability to each pure strategy without changing our expected payoff very
much.

Our conditional quasi-polynomial lower bounds will hold for all $\eps<\frac{1}{8}$, 
so let us fix $\eps^* < \frac{1}{8}$ for the rest of this section.
Using Theorem~\ref{thm:main}, we compute from~$\eps^*$ the parameters
$n$ and $\delta$ that we require to apply Theorem~\ref{thm:main}.
In particular, set $\delta^*$ to solve 
$\eps^* = (\frac{1 - 4 g \cdot \delta^*}{8})$,
and choose $n^*$ as $\frac{1}{\delta^*}$.
Then, for $n>n^*$ and $\delta=\delta^*$ we can apply Theorem~\ref{thm:main} to
bound the social welfare achievable if $\omega(\fcal) < 1 - \delta^*$ as
$$\ufrak := \frac{10}{8} - \frac{1}{522}\delta^*.$$
Theorem~\ref{thm:main} implies that in order to decide whether the game
$\gcal_s$ possesses an $\eps^*$-NE that yields social welfare strictly greater
than $\ufrak$ requires $n^{\tilde{O}(\log n)}$ time, where $\delta$ no longer
appears in the exponent since we have fixed it as the constant~$\delta^*$.

Problem~\ref{probc:largep} asks to decide whether a bimatrix game possesses
an $\eps^*$-NE  where the expected payoff for each player is at least $u$, 
where $u$ is an input to the problem. When we 
set $u = \frac{5}{8}$,
the conditional hardness of this problem is an immediate corollary of Theorem~\ref{thm:main}.
 

For Problems~\mbox{\ref{probc:restrictedweak}~-~\ref{probc:maxsupport}}, we use
$\gcal_s$ to construct a new game \gcalp, which adds one row \ifrak and one
column \jfrak to $\gcal_s$. The payoffs  are defined using the constants $\ufrak$
and~$\eps^*$, as shown in Figure~\ref{fig:gcalp}.

\begin{figure}[tbph!]
\begin{center}
\setlength{\extrarowheight}{2pt}
\hspace{-10mm}
\resizebox{0.7\textwidth}{!}{ 
\begin{tabular}{cc|P{1.5cm}|c|P{1.5cm}|P{1.75cm}|}
	\multirow{6}{*}{\Large \gcalp =}
	& \multicolumn{1}{c}{} & 
	  \multicolumn{1}{c}{} & 
	  \multicolumn{1}{c}{} & 
	  \multicolumn{1}{c}{} & 
	  \multicolumn{1}{c}{\large $\jfrak$} \\\cline{3-6}
	&          & \multicolumn{3}{|c|}{}      & $0, \frac{5}{8}+\eps^*$ \\ \cline{6-6}
	&          & \multicolumn{3}{|c|}{\Large $\gcal_s$} & $\vdots$ \\ \cline{6-6}
	&          & \multicolumn{3}{|c|}{}      & $0, \frac{5}{8}+\eps^*$ \\ \cline{3-6}
	& {\large $\ifrak$} & $\frac{5}{8}+\eps^*, 0$ & $\cdots$ & $\frac{5}{8}+\eps^*, 0$ & $1, 1$ \\\cline{3-6}
\end{tabular}
}
\end{center}
\medskip
\caption{The game \gcalp.}
\label{fig:gcalp}
\end{figure}


In \gcalp, the expected payoff for the row player for~\ifrak
is at least $\frac{5}{8}+\eps^*$ irrespective of the column player's strategy.
Similarly, 
the expected payoff for~\jfrak is at 
least~${\frac{5}{8}+\eps^*}$ irrespective of the row player's strategy.
This means that:
\begin{itemize}
\item
If $\gcal_s$ possesses an $\eps^*$-NE with social welfare $\frac{10}{8}$,
then \gcalp possesses at least one $\eps^*$-NE where 
the players do not play the pure strategies \ifrak and~\jfrak.
\item
If every $\eps^*$-NE of $\gcal_s$ yields social welfare at most 
\ufrak, then in \emph{every} $\eps^*$-NE of \gcalp, the players place almost 
all of their probability on \ifrak and \jfrak respectively. 
Note that $(\ifrak,\jfrak)$ is a pure exact Nash equilibrium.
\end{itemize}
 
Problem~\ref{probc:restrictedweak} asks whether a bimatrix game
possesses an $\eps$-NE where the row player plays with positive probability only
strategies in a given set $S$. 
Let $S_R$  ($S_C$) denote the set of pure strategies
available to the row (column) player from the subgame~$(R, C)$ of $\gcal_s$.
To show the hardness of Problem~\ref{probc:restrictedweak}, we will set we set $S=S_R$.

Recall that $\gcal_s$ is created from \fcal.
First, we prove in Lemma~\ref{lem:groupa-yes} that if~${\omega(\fcal)=1}$, then
$\gcal'$ possesses an $\eps^*$-NE such that the answer to
Problem~\ref{probc:restrictedweak} is ``Yes''.  
Note that we actually argue in Lemma~\ref{lem:groupa-yes} about the existence of
an $\eps^*$-WSNE, since this stronger claim will be useful when we come to deal with
Problems~\ref{probc:totalmaxsupport}~-~\ref{probc:maxsupport}.
\begin{lemma}
\label{lem:groupa-yes}
If $\omega(\fcal)=1$, then $\gcal'$ possesses an $\eps^*$-WSNE $(\xbf, \ybf)$
such that ${\supp(\xbf) \subseteq S_R}$.
Under $(\xbf, \ybf)$, both players get payoff $\frac{5}{8}$, so $\sw(\xbf, \ybf) = \frac{10}{8}$.
Moreover, $|\supp(\xbf)| = |X|$ and $\max_i \xbf_i \leq \frac{1}{|X|}$, where
$X$ is the question set of \mone in \fcal.
\end{lemma}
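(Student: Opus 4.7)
The plan is to reuse the strategy profile constructed in the proof of Lemma~\ref{lem:complete} and extend it trivially to $\gcalp$ by assigning $0$ probability to the new row $\ifrak$ and the new column $\jfrak$. Concretely, given optimal strategies $(s_1, s_2)$ for \fcal, I would set $\xbf(x, a) = \frac{1}{|X|}$ if $s_1(x) = a$ (and $0$ otherwise), and define $\ybf$ symmetrically over $S_C$. Then $\supp(\xbf) \subseteq S_R$, $|\supp(\xbf)| = |X|$, and $\max_i \xbf_i = \frac{1}{|X|}$ all follow immediately from the construction.

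To compute the payoffs, I would exploit the following observation. Because $V$ takes values in $\{0,1\}$ and $p(\fcal, s_1, s_2) = \omega(\fcal) = 1$, we must have $V(x, y, s_1(x), s_2(y)) = 1$ for \emph{every} pair $(x, y) \in X \times Y$, not just on average. Hence each pure strategy in $\supp(\xbf)$, played against $\ybf$, earns an unscaled payoff of exactly $1$ from the $(R,C)$ block of $\gcal$; the $(-D_2)$ block contributes $0$ because $\ybf$ is supported inside $S_C$, and the column $\jfrak$ contributes $0$ because $\ybf$ places no weight on it. After rescaling, every row in $\supp(\xbf)$ earns payoff $\frac{5}{8}$ in $\gcalp$, and symmetrically every column in $\supp(\ybf)$ earns $\frac{5}{8}$, giving $\sw(\xbf, \ybf) = \frac{10}{8}$.

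To verify the $\eps^*$-WSNE condition, the uniformity of payoffs across the support is crucial: it collapses the pure-strategy regret to the standard regret, so it suffices to bound the best-response payoff. Rows of $\gcal_s$ are bounded by $\frac{5}{8} + \eps^*$ by the rescaled version of Lemma~\ref{lem:complete}, and the new row $\ifrak$ pays exactly $\frac{5}{8} + \eps^*$ against $\ybf$ (which has no weight on $\jfrak$); the column player is handled symmetrically. The main obstacle is really the observation in the preceding paragraph: that the binary nature of $V$ together with $\omega(\fcal) = 1$ forces every support strategy to earn the same payoff, which is precisely what upgrades the NE guarantee of Lemma~\ref{lem:complete} into the stronger WSNE guarantee required here.
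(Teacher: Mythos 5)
Your proposal is correct and follows essentially the same route as the paper: reuse the Lemma~\ref{lem:complete} construction, observe that all support strategies earn equal payoff (which upgrades the NE guarantee to a WSNE guarantee), and check that the new row $\ifrak$ and column $\jfrak$ pay at most $\frac{5}{8}+\eps^*$. If anything, you are more explicit than the paper's proof, which asserts the equal-payoff property without justification; your observation that $V\in\{0,1\}$ together with $\omega(\fcal)=1$ forces $V(x,y,s_1(x),s_2(y))=1$ for \emph{every} pair $(x,y)$ is exactly the missing justification.
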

\begin{proof}
The proof of Lemma~\ref{lem:complete} shows that, if $\omega(\fcal)=1$, then
$\gcal$ possesses an $\eps^*$-WSNE $(\xbf, \ybf)$ where the expected payoff
for each player is $1$ and ${\supp(\xbf) \subseteq S_R}$.
The reason that $(\xbf, \ybf)$ is well supported is that all rows in $\supp(\xbf)$
have equal expected payoff.
Moreover, \xbf is a uniform mixture over a pure strategy set of size $|X|$, where
$X$ is the question set of \mone in \fcal. 
Since $\gcal_s$ is obtained from~$\gcal$ by adding 4 to the payoffs and dividing
by 8, $(\xbf,
\ybf)$ is as an $\eps^*$-NE in $\gcal_s$ where each player has payoff $\frac{5}{8}$.
To complete the proof we show that $(\xbf,\ybf)$ in an $\eps^*$-NE for
$\gcal'$, which is the same as $\gcal_s$ apart from the additional 
pure strategies~$\ifrak$ and~$\jfrak$. Since $\ifrak$ and~$\jfrak$ yield
payoff $\frac{5}{8}+\eps^*$, but not more, the claim holds.
\qed
\end{proof}

Next we prove that if  $\omega(\fcal) < 1-\delta^*$, then the
answer to Problem~\ref{probc:restrictedweak} is ``No''.

\begin{lemma}
\label{lem:groupa-no}
If $\omega(\fcal) < 1-\delta^*$, then in every $\eps^*$-NE $(\xbf, \ybf)$
of \gcalp it holds that $\xbfi > 1 - \frac{\eps^*}{1-\eps^*}$ and 
$\ybfj > 1 - \frac{\eps^*}{1-\eps^*}$ .
\end{lemma}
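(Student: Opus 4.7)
The plan is to combine the $\eps^*$-NE conditions in $\gcalp$ with the soundness result behind Theorem~\ref{thm:main} to force concentration on $\ifrak$ and $\jfrak$. Let $\alpha = \xbfi$, $\beta = \ybfj$, $a = 1-\alpha$, $b = 1-\beta$, and $s = \eps^*/(1-\eps^*)$. Write $\hat{\xbf}, \hat{\ybf}$ for the conditional distributions on $S_R, S_C$ and $v_R = \hat{\xbf}^T R_s \hat{\ybf}$, $v_C = \hat{\xbf}^T C_s \hat{\ybf}$. The row player's payoff decomposes as $u_R = \alpha \pi_R(\ifrak) + (1-\alpha)(1-\beta) v_R$, where $\pi_R(\ifrak) = (1-\beta)(\tfrac{5}{8}+\eps^*)+\beta$ is the payoff of the pure deviation to $\ifrak$. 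The $\eps^*$-NE condition $u_R \geq \pi_R(\ifrak) - \eps^*$ rearranges into the key inequality
\begin{equation*}
(1-\alpha)\bigl[(1-\beta)(\tfrac{5}{8}+\eps^* - v_R) + \beta\bigr] \leq \eps^*,
\end{equation*}
with the symmetric inequality via the $\jfrak$-deviation for the column player.

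Next I would invoke Lemma~\ref{lem:sound} (the soundness portion of Theorem~\ref{thm:main}), which after rescaling to $[0,1]$ bounds each player's payoff in any $\eps^*$-NE of $\gcal_s$ by $\tfrac{5-g\delta}{8} < \tfrac{5}{8}$. This gives $v_R, v_C \leq \tfrac{5-g\delta}{8}$; setting $\mu = g\delta/8 > 0$ and $c = 1-\eps^*-\mu$, the two inequalities simplify to $a(1-bc) \leq \eps^*$ and $b(1-ac) \leq \eps^*$.

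The remaining step is algebraic. On the joint boundary where both inequalities are tight, symmetry forces $a = b$ and $a^2 c - a + \eps^* = 0$, whose smaller root is $a_- = (1 - \sqrt{1-4\eps^* c})/(2c)$. When $\mu = 0$ one has $a_- = s$ exactly, and $a_-$ decreases strictly as $c$ decreases from $1-\eps^*$; the other root exceeds $1$ and is out of range. Hence $\mu > 0$ yields $a_- < s$, and every feasible $(a,b)$ must satisfy $a \leq a_- < s$, giving $\xbfi > 1 - \eps^*/(1-\eps^*)$; by symmetry the same bound holds for $\ybfj$.

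The main obstacle will be justifying the bound on $v_R$ and $v_C$ in the second step: Lemma~\ref{lem:sound} applies directly only to $\eps^*$-NE of $\gcal_s$, whereas the restricted distribution $(\hat{\xbf}, \hat{\ybf})$ is only an $\eps^*/((1-\alpha)(1-\beta))$-NE of $\gcal_s$, with a regret bound that worsens as $\alpha$ or $\beta$ approach $1$. Resolving this will require either tracing through the proof of Lemma~\ref{lem:sound} to handle the weaker regret regime, or passing to an auxiliary $\eps^*$-NE of $\gcal_s$ whose row and column payoffs dominate $v_R$ and $v_C$ (for instance by applying a best-response step to $(\hat{\xbf}, \hat{\ybf})$) and then invoking the lemma on that auxiliary profile.
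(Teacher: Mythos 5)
Your proposal follows the same core strategy as the paper's proof: decompose the row player's payoff in $\gcalp$ with the constant terms coming from row $\ifrak$ and column $\jfrak$, invoke the soundness bound on the $\gcal_s$ contribution to upper-bound that payoff, apply the $\eps^*$-NE condition against the deviation to $\ifrak$, and close with algebra. The only real difference is the closing step. The paper uses a monotone substitution: it derives $\xbf_\ifrak > 1 - \eps^*/((1-\ybf_\jfrak)\eps^* + \ybf_\jfrak)$ and its symmetric counterpart, plugs the first into the second (using that the right-hand side is increasing in $\xbf_\ifrak$), and reduces to the quadratic $\ybf_\jfrak^2(1-\eps^*)+\ybf_\jfrak(2\eps^*-1) > 0$. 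You instead analyze the feasible region $\{a(1-bc)\leq\eps^*,\ b(1-ac)\leq\eps^*\}$ directly, observing that on the diagonal the binding constraint is a quadratic whose smaller root controls the whole region. Your route is equivalent and arguably more transparent (and the unstated step ``every feasible $(a,b)$ has $a\leq a_-$'' does go through: if $a\geq b$ then $a(1-ac)\leq a(1-bc)\leq\eps^*$, so $a^2c-a+\eps^*\geq 0$ forces $a\leq a_-$ since $a_+>1$ when $\mu>0$). The paper exploits only the strict inequality $v_R<\tfrac{5}{8}$, whereas your $c=1-\eps^*-\mu$ carries the quantitative slack $\mu=g\delta/8$; both suffice.

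Your flagged obstacle is genuine and, notably, is present in the paper's proof as well: the paper asserts $v_R<\tfrac{5}{8}$ by citing the soundness of Theorem~\ref{thm:main}, but that theorem bounds payoffs only at $\eps^*$-NE of $\gcal_s$, and the conditional profile $(\hat\xbf,\hat\ybf)$ need not be one. Your quantitative claim about its regret is in fact correct: combining the $\ifrak$-deviation lower bound on $v_R$ with the bound from the best $\gcal_s$-row deviation gives row regret at most $\eps^*/((1-\alpha)(1-\beta))$ for $(\hat\xbf,\hat\ybf)$ in $\gcal_s$, which degrades badly for moderate $\alpha,\beta$. Of your two proposed repairs, the second (apply a best-response step to get a dominating $\eps^*$-NE) is unlikely to work because $\gcal_s$ contains zero-sum blocks $(D_1,-D_1)$ and $(-D_2,D_2)$, so a best-response step by one player can strictly decrease the other player's payoff, and there is no reason a nearby $\eps^*$-NE should dominate both $v_R$ and $v_C$. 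The first (reparametrize the soundness argument by the worsened regret) is more promising but requires a case split: for $\alpha,\beta$ small the conditional profile has regret close to $\eps^*$ and a mildly strengthened Lemma~\ref{lem:sound} applies; for $\alpha,\beta$ large one is already near the conclusion; the intermediate range is the part neither you nor the paper handle explicitly.
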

\begin{proof}


Let $\gcal_s := (P, Q)$ and
suppose that $(\xbf, \ybf)$ is an
$\eps^*$-NE of \gcalp. From Theorem~\ref{thm:main} we know that 
if $\omega(\fcal) < 1-\delta^*$, then in any 
$\eps^*$-NE of $\gcal_s$ we have that each player gets payoff at most 
$\frac{\ufrak}{2} < \frac{5}{8}$. 
Under $(\xbf, \ybf)$ in \gcalp the row player gets payoff 
\begin{align*}
\xbf^T P \ybf & < (1-\xbf_\ifrak)\cdot(1-\ybf_\jfrak)\cdot \frac{5}{8} + 
\xbf_\ifrak \cdot(1-\ybf_\jfrak) (\frac{5}{8}+ \eps^*) + \xbf_\ifrak\cdot\ybf_\jfrak \\
& = \xbf_\ifrak\cdot \left( (1-\ybf_\jfrak) \cdot \eps^* + \ybf_\jfrak \right)
+ (1-\ybf_\jfrak) \cdot \frac{5}{8}.
\end{align*}
From the pure strategy \ifrak, the row player gets
$$P_\ifrak\cdot\ybf = (1-\ybf_\jfrak)(\frac{5}{8}+\eps^*) + \ybf_\jfrak.$$ 
In 
order for $(\xbf, \ybf)$ to be an $\eps^*$-NE it must hold that 
$\xbf^T P \ybf \geq P_\ifrak\ybf - \eps^*$. Using the upper bound
on $\xbf^T P\ybf$ that we just derived, we get:
\begin{align}
\label{eq:xbound}
\xbf_\ifrak > 1 - \frac{\eps^*}{(1-\ybf_\jfrak)\cdot \eps^* + \ybf_\jfrak}.
\end{align}
By symmetry, we also have that the column player must play
\jfrak with probability:
\begin{align}
\label{eq:ybound}
\ybf_\jfrak > 1 - \frac{\eps^*}{(1-\xbf_\ifrak)\cdot \eps^* + \xbf_\ifrak}.
\end{align}

Recall that in this section $\eps^*$ is a constant.
Observe that
the right-hand side of~\eqref{eq:ybound} is increasing in
$\xbf_\ifrak$, and we can thus use it to replace $\xbf_\ifrak$ in~\eqref{eq:ybound}
as follows:
\begin{align*}
\ybf_\jfrak & > 1- \frac{\eps^*}{(1-1+\frac{\eps^*}{(1-\ybf_\jfrak)\eps^*+\ybf_\jfrak})\eps^*+
1-\frac{\eps^*}{(1-\ybf_\jfrak)\eps^*+\ybf_\jfrak}}\\
 & = 1- \frac{\eps^*}{\frac{\eps^{*^2}}{(1-\ybf_\jfrak)\eps^*+\ybf_\jfrak}+
1-\frac{\eps^*}{(1-\ybf_\jfrak)\eps^*+\ybf_\jfrak}}\\
 & = 1- \frac{(1-\ybf_\jfrak)\eps^{*^2} + \ybf_\jfrak \eps^*}{\eps^{*^2}+(1-\ybf_\jfrak)\eps^*+\ybf_\jfrak-\eps^*}.
\end{align*}
Noting that $(\eps^{*^2}+(1-\ybf_\jfrak)\eps^*+\ybf_\jfrak-\eps^*) \ge 0$, by rearranging we get that
$$\ybf_\jfrak^2(1-\eps^*) + \ybf_\jfrak(2\eps^*-1) > 0.$$
Then, since $\eps^*<\frac{1}{8}$, we have $1-\eps ^*> 0$, and we get that 
\begin{align*}
\ybf_\jfrak > \frac{1-2\eps^*}{1-\eps^*} = 1 - \frac{\eps^*}{1-\eps^*}.
\end{align*}
By symmetry, we have $\xbf_\ifrak > 1 - \frac{\eps^*}{1-\eps^*}$, which completes the proof.\qed
\end{proof}

Next we recall Problems~\ref{probc:difcomp} and~\ref{probc:maxprob} and we show
that, as for Problem~\ref{probc:restrictedweak}, Lemmas~\ref{lem:groupa-yes}
and~\ref{lem:groupa-no} can also be used to immediately show that there are
instances of these decision problems where the answer is ``Yes'' if and only if
$\omega(\fcal)=1$.

Given two probability distributions $\xbf$ and $\xbf'$, the \emph{Total
Variation (TV)} distance between them is $\max_i \{ |\xbf_i - \xbf'_i| \}$. We
define the TV distance between two strategy profiles $(\xbf, \ybf)$ and $(\xbf',
\ybf')$ to be the maximum over the TV distance of $\xbf$ and $\xbf'$ and the TV
distance of $\ybf$ and $\ybf'$.  Problem~\ref{probc:difcomp} asks whether a
bimatrix game possesses two \eps -NEs with TV distance at least $d$.  
In order to apply Lemmas~\ref{lem:groupa-yes}
and~\ref{lem:groupa-no}, we will set $d=1 - \frac{\eps^*}{1-\eps^*}$.
Then an instance $\gcalp$ of Problem~\ref{probc:difcomp} is ``Yes'' when
$\omega(\fcal)=1$ since the $\eps^*$-NE $(\xbf, \ybf)$ identified in
Lemma~\ref{lem:groupa-yes}, has TV distance one from the pure exact Nash
equilbrium $(\ifrak,\jfrak)$.
Lemma~\ref{lem:groupa-no} says that, if $\omega(\fcal) < 1-\delta^*$, 
every $\eps^*$-NE $(\xbf, \ybf)$ of \gcalp has $\xbfi > 1 - \frac{\eps^*}{1-\eps^*}$ and so 
all $\eps^*$-NE are within TV distance $1-\frac{\eps^*}{1-\eps^*}$ of each other.

Problem~\ref{probc:maxprob} asks to decide whether there exists an \eps-NE
where the row player does not play any pure strategy with probability more than
$p$. 
For this problem, we set $p = \frac{1}{|X|}$, where $X$ is the question set for 
\mone. According Lemma~\ref{lem:groupa-yes}, if $\omega(\fcal)=1$, then an instance
$\gcalp$ of Problem~\ref{probc:maxprob} is a ``Yes''. 
Lemma~\ref{lem:groupa-no} says that, if $\omega(\fcal) < 1-\delta^*$, 
then for all $\eps^*$-NE $(\xbf, \ybf)$ of \gcalp, $\max_i\xbf_i \geq
\xbf_\ifrak > 1 - \frac{\eps^*}{1-\eps^*} > \frac{1}{|X|}$.

Problem~\ref{probc:smalltp}
asks whether a bimatrix game possesses an $\eps$-NE with social welfare at most
$v$, and Problem~\ref{probc:smallp} asks whether a bimatrix game possesses an
\eps -NE where the expected payoff of the row player is at most $u$.
We fix $v = \frac{10}{8}$ for Problem~\ref{probc:smalltp},
and for Problem~\ref{probc:smallp} we fix $u = \frac{5}{8}$.
As we have already explained in the proof of Lemma~\ref{lem:groupa-yes}, 
if $\omega(\fcal)=1$, then there is an $\eps^*$-NE for $\gcal'$ such that the expected
payoff for each player is $\frac{5}{8}$ and thus the social welfare is $\frac{10}{8}$.
So, if $\omega(\fcal)=1$, then the answer to Problems~\ref{probc:smalltp} and~\ref{probc:smallp}
is ``Yes''. On the other hand, from the proof of Lemma~\ref{lem:groupa-no} we know that if 
$\omega(\fcal) < 1-\delta^*$, then  in 
any $\eps^*$-NE of \gcalp both players play the strategies \ifrak and \jfrak 
with probability at least $1 - \frac{\eps^*}{1-\eps^*}$. So, each player gets payoff at least 
$(1 - \frac{\eps^*}{1-\eps^*})^2 > \frac{5}{8}$, since $\eps^* < \frac{1}{8}$, from their 
pure strategies \ifrak and \jfrak.
So, if $\omega(\fcal)<1-\delta^*$, then the answer to Problems~\ref{probc:smalltp} 
and~\ref{probc:smallp} is ``No''. 

Problems~\ref{probc:totalmaxsupport}~-~\ref{probc:maxsupport} relate to 
deciding if there exist approximate \emph{well-supported} equilibria with large supports 
(for \eps-NE these problems would be trivial).
Problem~\ref{probc:totalmaxsupport} asks whether a bimatrix game possesses an
\eps-WSNE~$(\xbf, \ybf)$ with $|\supp(\xbf)| + |\supp(\ybf)| \geq 2k$.
Problem~\ref{probc:minmaxsupport} asks whether a bimatrix game possesses an
{}\eps-WSNE~$(\xbf, \ybf)$ with $\min\{|\supp(\xbf)|, |\supp(\ybf)|\} \geq k$.
Problem~\ref{probc:maxsupport} asks whether a bimatrix game possesses an
\eps-WSNE $(\xbf, \ybf)$ with $|\supp(\xbf)| \geq k$.  
Recall that $X$ and~$Y$ are the question sets of \mone and \mtwo respectively that 
were used to define \fcal and in turn $\gcal_s$. 
We will fix $k = |X| = |Y|$ for all three problems.

If $\omega(\fcal)=1$, then Lemma~\ref{lem:groupa-yes}
says that there exists an $\eps^*$-WSNE $(\xbf, \ybf)$ for~\gcalp such that
$|\supp(\xbf)|= |\supp(\ybf)|=k$ and thus the answer to 
Problems~\ref{probc:totalmaxsupport}~-~\ref{probc:maxsupport} 
is ``Yes''. On the other hand, if $\omega(\fcal)<1-\delta$, then we will
prove that there is a unique $\eps^*$-WSNE where the row player plays only the pure strategy \ifrak and
the column player plays the pure strategy \jfrak.
\begin{lemma}
\label{lem:groupc}
If $\omega(\fcal)<1-\delta^*$, then there is a unique $\eps^*$-WSNE $(\xbf, \ybf)$ in \gcalp 
such that $\xbf_\ifrak=1$ and $\ybf_\jfrak=1$.
\end{lemma}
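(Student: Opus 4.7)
My plan is to leverage Lemma~\ref{lem:groupa-no} and then use the much stronger requirement of a WSNE (that \emph{every} strategy in the support is an $\eps^*$-best response) to rule out any pure strategy other than $\ifrak$ for the row player and $\jfrak$ for the column player.

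First I would observe that every $\eps^*$-WSNE is an $\eps^*$-NE, so if $\omega(\fcal) < 1-\delta^*$, then Lemma~\ref{lem:groupa-no} applies to any $\eps^*$-WSNE $(\xbf,\ybf)$ and gives
\[
\xbfi > 1 - \frac{\eps^*}{1-\eps^*}, \qquad \ybfj > 1 - \frac{\eps^*}{1-\eps^*}.
\]
So the bulk of each player's probability is already concentrated on $\ifrak$ and $\jfrak$; the task is to upgrade ``bulk of probability'' to ``all probability''.

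Next, I would compare the payoff of $\ifrak$ against $\ybf$ with the payoff of any other row $i \in S_R$. For $i \in S_R$, the entry in column $\jfrak$ is $0$ and the entries in columns of $S_C$ lie in $[0,1]$ (since $\gcal_s$ is rescaled to $[0,1]$), so the payoff is at most $1 - \ybfj < \frac{\eps^*}{1-\eps^*}$. On the other hand, the payoff of row $\ifrak$ against $\ybf$ is at least $\ybfj > 1 - \frac{\eps^*}{1-\eps^*}$. Therefore the gap between $\ifrak$ and any other row is at least
\[
1 - \frac{2\eps^*}{1-\eps^*}.
\]
A short calculation shows this gap strictly exceeds $\eps^*$ whenever $1 - 4\eps^* + \eps^{*2} > 0$, i.e.\ for $\eps^* < 2 - \sqrt{3}$, which is comfortably satisfied by our assumption $\eps^* < \frac{1}{8}$. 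By definition of $\eps^*$-WSNE, no such $i$ can lie in $\supp(\xbf)$, so $\xbf_\ifrak = 1$. By an entirely symmetric argument on the column player (swapping the roles of $\ifrak/\jfrak$ and $S_R/S_C$), we conclude $\ybf_\jfrak = 1$. Uniqueness is then automatic, since $\xbfi = \ybfj = 1$ determines $(\xbf,\ybf)$ completely.

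The main obstacle is simply checking that the quadratic inequality in $\eps^*$ goes the right way; the rest is direct bookkeeping on the payoffs shown in Figure~\ref{fig:gcalp}. Note that this uses WSNE in an essential way: in an $\eps^*$-NE, the row player could spread a tiny amount of probability over rows in $S_R$ without contradicting the $\eps^*$-best-response condition, but under WSNE this is forbidden once the gap exceeds $\eps^*$.
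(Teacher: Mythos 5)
Your proof is correct and follows essentially the same route as the paper: apply Lemma~\ref{lem:groupa-no} to bound $\xbfi$ and $\ybfj$, then compute the payoff gap between $\ifrak$ (resp.\ $\jfrak$) and every other pure strategy, and invoke the WSNE condition once that gap exceeds $\eps^*$. One small slip: you restrict attention to rows $i\in S_R$, but the row player of $\gcalp$ also has the $D_1$-rows of $\gcal_s$; fortunately those rows also pay $0$ against $\jfrak$ and lie in $[0,1]$ after rescaling, so exactly the same bound of $1-\ybfj$ applies to every row $i\neq\ifrak$, and the argument goes through unchanged.
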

\begin{proof}
We consider only the case that $\omega(\fcal)<1-\delta^*$.
Then Lemma~\ref{lem:groupa-no} says that in 
every $\eps^*$-NE of \gcalp the column player plays the 
pure strategy \jfrak with probability at least $1 - \frac{\eps^*}{1-\eps^*}$. 
Against \jfrak, the row player gets $0$ for all pure strategies $i \neq \ifrak$ and
$1$ for \ifrak.
Thus, in any $\eps^*$-NE of \gcalp, for every pure strategy $i \neq \ifrak$, the
row player gets at most $\frac{\eps^*}{1-\eps^*}$ from every pure strategy
$i$, and the row player gets at least $1 - \frac{\eps^*}{1-\eps^*}$ from
\ifrak.
So, in every $\eps^*$-WSNE the row player must play only the pure strategy \ifrak since from 
every other pure strategy the player suffers regret at least $1 - \frac{2\eps^*}{1-\eps^*}$, which
is strictly larger than $\eps^*$ for every $\eps^* < \frac{1}{8}$. In turn, against~\ifrak,
every pure strategy $j \neq \jfrak$ for the column player yields zero payoff 
while the strategy \jfrak yields payoff 1. So, the unique $\eps^*$-WSNE of \gcalp
is $\xbf_\ifrak=1$ and~$\ybf_\jfrak=1$.
\qed
\end{proof}
Hence, when $\omega(\fcal)<1-\delta^*$ the answer to 
Problems~\ref{probc:totalmaxsupport}~-~\ref{probc:maxsupport} is ``No''. 
Thus, we have shown the following: 
\begin{theorem}
\label{thm:groupc}
Assuming ETH, any algorithm that solves the Problems~\ref{probc:restrictedweak}
-~\ref{probc:maxsupport}
for any constant $\eps< \frac{1}{8}$ requires $n^{\tilde{\Omega}(\log n)}$ time.
\end{theorem}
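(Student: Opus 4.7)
The plan is to collect the case analyses that have already been assembled for each individual problem and observe that they all reduce uniformly from \freegame, so that Theorem~\ref{thm:qplower} supplies the desired lower bound. In every case the reduction is: given a free game $\fcal$, first build $\gcal_s$ as in Section~\ref{sec:bday}, then form $\gcalp$ by adding the dominant pure strategies $\ifrak,\jfrak$ as in Figure~\ref{fig:gcalp}. Since $\gcalp$ has size polynomial in $|\fcal|$ and $|\fcal|$ is itself polynomial in the size of the underlying 3SAT instance, any algorithm running in time $n^{o(\log n/(\log\log n)^c)}$ on $\gcalp$ would solve \freegame in the same time bound, contradicting Theorem~\ref{thm:qplower}.

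For each target problem I would instantiate the problem parameter so that Lemma~\ref{lem:groupa-yes} directly provides a ``Yes'' witness when $\omega(\fcal)=1$, and Lemma~\ref{lem:groupa-no} (or Lemma~\ref{lem:groupc} for the well-supported variants) forces ``No'' when $\omega(\fcal)<1-\delta^*$. Concretely: for Problem~\ref{probc:restrictedweak} take $S=S_R$, the $S_R$-supported witness from Lemma~\ref{lem:groupa-yes} gives ``Yes'', while Lemma~\ref{lem:groupa-no} places weight $>1-\eps^*/(1-\eps^*)$ on $\ifrak\notin S_R$ giving ``No''. For Problem~\ref{probc:difcomp} take $d=1-\eps^*/(1-\eps^*)$: the witness of Lemma~\ref{lem:groupa-yes} together with the pure equilibrium $(\ifrak,\jfrak)$ are $\eps^*$-NE at TV distance $1$, while Lemma~\ref{lem:groupa-no} forces all $\eps^*$-NE into a ball of TV-diameter $<d$. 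Problem~\ref{probc:maxprob} with $p=1/|X|$, Problem~\ref{probc:smalltp} with $v=10/8$, and Problem~\ref{probc:smallp} with $u=5/8$ follow by the same dichotomy, using the $1/|X|$-uniform witness of Lemma~\ref{lem:groupa-yes} on one side and the lower bound $(1-\eps^*/(1-\eps^*))^2>5/8$ on the payoff at $(\ifrak,\jfrak)$ (valid since $\eps^*<1/8$) on the other.

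For the WSNE problems~\ref{probc:totalmaxsupport}--\ref{probc:maxsupport}, I would set $k=|X|=|Y|$ and $S_R$ equal to the appropriate pure strategy set in Problem~\ref{probc:restricting}. The ``Yes'' side uses the fact that Lemma~\ref{lem:groupa-yes} produces an $\eps^*$-WSNE (not merely an $\eps^*$-NE) whose supports are uniform over sets of size $|X|$ and $|Y|$; this is where the well-supported guarantee of that lemma is essential. The ``No'' side is exactly Lemma~\ref{lem:groupc}, which shows the unique $\eps^*$-WSNE of $\gcalp$ is the pure profile $(\ifrak,\jfrak)$, and hence has support size $1<k$ on either side. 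Problem~\ref{probc:largep} with $u=5/8$ is already a direct corollary of Theorem~\ref{thm:main} applied to $\gcal_s$, as remarked in the section.

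The only thing that is not already spelled out in the lemmas is to verify that for every problem in the list the answer to the decision problem is monotone with respect to the two possibilities $\omega(\fcal)=1$ and $\omega(\fcal)<1-\delta^*$, and to check that the chosen thresholds lie in the admissible ranges given in Table~\ref{tab:problems}. This is a routine case check. The resulting reductions are uniform in $\eps^*<\tfrac18$ (the chosen $\delta^*$ depends on $\eps^*$ only through Theorem~\ref{thm:main}), so Theorem~\ref{thm:qplower} yields the claimed $n^{\widetilde{\Omega}(\log n)}$ lower bound for all such $\eps^*$, completing the proof. \qed
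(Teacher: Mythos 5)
Your proposal matches the paper's argument essentially exactly: the same parameter instantiations ($S=S_R$, $d=1-\eps^*/(1-\eps^*)$, $p=1/|X|$, $v=10/8$, $u=5/8$, $k=|X|=|Y|$), the same use of Lemma~\ref{lem:groupa-yes} for the ``Yes'' side, Lemma~\ref{lem:groupa-no} for the ``No'' side of the \eps-NE problems, and Lemma~\ref{lem:groupc} for the ``No'' side of the WSNE problems, and the same appeal to Theorem~\ref{thm:qplower} via the polynomial-size reduction. One small inaccuracy: your mention of ``$S_R$ equal to the appropriate pure strategy set in Problem~\ref{probc:restricting}'' is out of place --- Problem~\ref{probc:restricting} is handled separately by Theorem~\ref{thm:main2} using the larger game $\gcalpp$, not $\gcalp$, and is not within the scope of Theorem~\ref{thm:groupc}; but since it is not actually part of the claim being proved, this does not damage the proof.
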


Finally, for Problem~\ref{probc:restricting}, we define a new game \gcalpp by
extending \gcalp. We add the new pure strategies \ifrakp for the row player and
\jfrakp for the column player. 
The payoffs are shown in Figure~\ref{fig:gcalpp}.  
Recall that Problem~\ref{probc:restricting} asks whether a bimatrix game
possesses an \eps-WSNE such that every strategy from a given set $S$ is played
with positive probability.

In order to prove our result we fix $S=\ifrakp$.
First, we prove that if ${\omega(\fcal)=1}$ then the game \gcalpp possesses an
$\eps^*$-WSNE $(\xbf, \ybf)$ such that $\ifrakp \in \supp(\xbf)$.  Then we prove
that if $\omega(\fcal)<1-\delta$, then for any $\eps^*$-WSNE $(\xbf, \ybf)$ it
holds that~${\ifrakp \notin \supp(\xbf)}$.


\begin{figure}[tbph!]
\hspace{1cm}
\setlength{\extrarowheight}{2pt}
\resizebox{0.7\textwidth}{!}{ 
\begin{tabular}{cc|P{1.5cm}|c|P{1.5cm}|P{1.75cm}|}
\multirow{6}{*}{\Large \gcalpp =}
& \multicolumn{1}{c}{} & 
  \multicolumn{1}{c}{} & 
  \multicolumn{1}{c}{} & 
  \multicolumn{1}{c}{} & 
  \multicolumn{1}{c}{\large $\jfrakp$} \\\cline{3-6}
&          & \multicolumn{3}{|c|}{}      & $\frac{5}{8}, \frac{5}{8}$ \\ \cline{6-6}
&          & \multicolumn{3}{|c|}{\Large \gcalp} & $\vdots$ \\ \cline{6-6}
&          & \multicolumn{3}{|c|}{}      & $\frac{5}{8}, \frac{5}{8}$ \\ \cline{3-6}
& {\large $\ifrakp$} & $\frac{5}{8}, \frac{5}{8}$ & $\cdots$ & $\frac{5}{8}, \frac{5}{8}$ & $0, 0$ \\\cline{3-6}
\end{tabular}
}
\medskip
\caption{The game \gcalpp.}
\label{fig:gcalpp}
\end{figure}


%
\begin{lemma}
\label{lem:ipinsup}
If $\omega(\fcal)=1$, 
then \gcalpp possesses an $\eps^*$-WSNE $(\xbf, \ybf)$ such that
${\ifrakp \in \supp(\xbf)}$.
\end{lemma}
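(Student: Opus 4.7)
The plan is to extend the $\eps^*$-WSNE of $\gcalp$ guaranteed by Lemma~\ref{lem:groupa-yes} into an $\eps^*$-WSNE of $\gcalpp$ that places positive mass on $\ifrakp$. Let $(\xbf, \ybf)$ denote that profile, so that $\supp(\xbf) \subseteq S_R$, $\supp(\ybf) \subseteq S_C$, and both players receive payoff $\tfrac{5}{8}$. I would fix any $\alpha \in (0, 1)$ and define $(\xbf', \ybf')$ on $\gcalpp$ by setting $\xbf'(\ifrakp) = \alpha$, $\xbf'(i) = (1-\alpha)\xbf(i)$ for $i$ a row of $\gcalp$, and $\ybf'(\jfrakp) = 0$, $\ybf'(j) = \ybf(j)$ for $j$ a column of $\gcalp$. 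The design choice of excluding $\jfrakp$ from $\supp(\ybf')$ is critical: it ensures that $\ifrakp$ is never matched against the lone entry $(\ifrakp, \jfrakp) = (0,0)$, which would otherwise depress its payoff.

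Next I would verify the WSNE conditions for $(\xbf', \ybf')$. On the row side, every strategy in $\supp(\xbf') = \supp(\xbf) \cup \{\ifrakp\}$ yields payoff $\tfrac{5}{8}$: the strategies in $\supp(\xbf)$ already do so by Lemma~\ref{lem:groupa-yes}, and $\ifrakp$ pays $\tfrac{5}{8}$ against every non-$\jfrakp$ column of $\gcalpp$ by construction. The best row deviation is $\ifrak$, which pays $\tfrac{5}{8} + \eps^*$ against $\ybf'$ (since $\supp(\ybf') \subseteq S_C$ and $\ifrak$ pays $\tfrac{5}{8} + \eps^*$ against every column of $S_C$), so the row regret is exactly $\eps^*$. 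On the column side, the extra mass $\alpha$ on $\ifrakp$ does not change the payoff of any column in $\supp(\ybf) \subseteq S_C$, because $\ifrakp$ pays the column player $\tfrac{5}{8}$ against any non-$\jfrakp$ column. The best column deviation is $\jfrak$, yielding $(1-\alpha)(\tfrac{5}{8}+\eps^*) + \alpha\cdot \tfrac{5}{8} = \tfrac{5}{8} + (1-\alpha)\eps^*$, so the column regret is at most $(1-\alpha)\eps^* \leq \eps^*$.

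The main point requiring care is the choice $\ybf'(\jfrakp) = 0$: putting any mass $\beta > 0$ on $\jfrakp$ would drop $\ifrakp$'s payoff to $(1-\beta)\tfrac{5}{8}$, and since the best response would remain close to $\tfrac{5}{8} + \eps^*$, keeping $\ifrakp$ in the support would force $\beta$ to be very small. Leaving $\jfrakp$ unused sidesteps this issue entirely, and the construction then delivers an $\eps^*$-WSNE of $\gcalpp$ with $\ifrakp \in \supp(\xbf')$ for every $\alpha \in (0,1)$.
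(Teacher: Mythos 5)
Your proof is correct and follows essentially the same approach as the paper: start from the $\eps^*$-WSNE of $\gcalp$ supplied by Lemma~\ref{lem:groupa-yes}, add $\ifrakp$ to the row player's support while keeping $\jfrakp$ out of the column player's support, and check that all pure-strategy regrets stay within $\eps^*$. The only (inessential) difference is that the paper fixes $\alpha = \frac{1}{|X|+1}$ to keep $\xbf$ uniform, whereas you allow any $\alpha \in (0,1)$ and compute the regrets directly, which is a harmless generalization.
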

\begin{proof}
Lemma~\ref{lem:groupa-yes} says that if $\omega(\fcal)=1$, then \gcalp
	possesses an $\eps^*$-WSNE $(\xbf', \ybf')$ that gives payoff $\frac{5}{8}$
	for each player, and $\xbf'$ is uniform on a set of size~$|X|$.  
We construct the required $\eps^*$-WSNE of \gcalpp from $(\xbf', \ybf')$ as
	follows.  We add~\ifrakp to the support of $\xbf'$ so that \xbf is a uniform
	mixture over $\supp(\xbf')\cup\ifrakp$.  For the column player, we extend
	$\ybf'$ by adding zero probability for \jfrakp.

Against \ybf, pure strategies in $\supp(\xbf')$ give payoff $\frac{5}{8}$, pure
	strategy \ifrak in \gcalp yields payoff $\frac{5}{8}+\eps^*$, and \ifrakp
	gives payoff $\frac{5}{8}$.  Thus, since $(\xbf', \ybf')$ is an
	$\eps^*$-WSNE of \gcalp, \xbf has pure regret at most $\eps^*$ against \ybf,
	as required.  What remains is to show that the pure regret of \ybf is no more
	than $\eps^*$ against \xbf.
	Recall that, in \gcalp, against $\xbf'$, 
	the payoff of each pure strategy in $\supp(\ybf')$ is~$\frac{5}{8}$.
	Now consider~\gcalpp. Since, against $\ifrakp$, the column player gets $\frac{5}{8}$ for all 
	$j \in \supp(\ybf)$, the column
	player still gets $\frac{5}{8}$ against~$\xbf$ for all $j \in \supp(\ybf)$.
	Moreover, against $\xbf$, the payoff of $\jfrak'$ is $\frac{|X|}{|X|+1}
	\cdot \frac{5}{8} < \frac{5}{8}$.
	Thus, since $(\xbf', \ybf')$ is an $\eps^*$-WSNE of \gcalp, we have that
	$(\xbf, \ybf)$ is an $\eps^*$-WSNE of \gcalpp with $\ifrakp \in \supp(\xbf)$, which
	completes the proof.
\qed
\end{proof}

\begin{lemma}
\label{lem:ipninsup}
If $\omega(\fcal)<1-\delta^*$, then 
for any~$\eps^*$-WSNE $(\xbf, \ybf)$ of \gcalpp it holds that 
$\ifrakp \notin \supp(\xbf)$.
\end{lemma}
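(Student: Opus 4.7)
The plan is to argue by contradiction. I would assume that $(\xbf,\ybf)$ is an $\eps^*$-WSNE of $\gcalpp$ with $\ifrakp\in\supp(\xbf)$ and derive a contradiction with Lemma~\ref{lem:groupa-no} via a shift argument.

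First I would constrain the column player using the row's \ifrakp-regret condition. Writing $q=1-\ybf_\jfrak-\ybf_\jfrakp$, a direct computation from the payoffs in Figure~\ref{fig:gcalpp} gives
$$(R\ybf)_\ifrak-(R\ybf)_\ifrakp \;=\; \eps^* q+\tfrac{3}{8}\ybf_\jfrak+\tfrac{5}{8}\ybf_\jfrakp.$$
Since \ifrakp lies in the support of an $\eps^*$-WSNE this gap is at most $\eps^*$; substituting $q = 1-\ybf_\jfrak-\ybf_\jfrakp$ reduces the inequality to $(\tfrac{3}{8}-\eps^*)\ybf_\jfrak+(\tfrac{5}{8}-\eps^*)\ybf_\jfrakp\le 0$, and because $\eps^*<\tfrac{1}{8}$ both coefficients are strictly positive, forcing $\ybf_\jfrak=\ybf_\jfrakp=0$. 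Hence \ybf is supported only on columns of $\gcal_s$.

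Next I would transport $(\xbf,\ybf)$ to the smaller game \gcalp. Define $\xbf'$ by shifting the \ifrakp-mass of \xbf onto \ifrak: $\xbf'_\ifrak = \xbf_\ifrak+\xbf_\ifrakp$, $\xbf'_\ifrakp=0$, and $\xbf'_r=\xbf_r$ otherwise. Because \ybf has no mass on \jfrakp, the pair $(\xbf',\ybf)$ is a legal strategy profile in \gcalp. For the row player, the shift only raises the expected payoff (\ifrak dominates \ifrakp by exactly $\eps^*$ against $\ybf$) while the best-response payoff $\tfrac{5}{8}+\eps^*$ is unchanged, so the row regret in \gcalp is at most the row regret in \gcalpp, hence at most $\eps^*$. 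For the column player, the shift is explicit: every $\gcal_s$-column payoff decreases by $\tfrac{5}{8}\xbf_\ifrakp$ while the \jfrak payoff increases by $\tfrac{3}{8}\xbf_\ifrakp$, and combining these shifts with the WSNE regret bound in \gcalpp yields an $\eps^*$-NE condition in \gcalp. Lemma~\ref{lem:groupa-no} then forces $\ybf_\jfrak>1-\tfrac{\eps^*}{1-\eps^*}>0$, contradicting the first step.

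The hardest part will be the column-player bookkeeping in the transport step, since the simultaneous changes to \jfrak and to the $\gcal_s$-columns must jointly leave the regret below $\eps^*$. If this transport does not go through cleanly, a backup is to apply Theorem~\ref{thm:main} directly to the restriction $(\hat\xbf,\hat\ybf)$ of $(\xbf,\ybf)$ to $\gcal_s$-rows and $\gcal_s$-columns (with $\hat\xbf$ renormalized): the row player's WSNE condition already forces $\max_r (R^s\ybf)_r \le \tfrac{5}{8}+\eps^*$ while every $r\in\supp(\hat\xbf)$ contributes at least $\tfrac{5}{8}$, giving a row regret of at most $\eps^*$; the column player's side is handled by using the \jfrak-deviation bound in \gcalpp to lower-bound column payoffs on $\supp(\ybf)$, producing an approximate NE of $\gcal_s$ whose social welfare strictly exceeds $\ufrak$, contradicting the soundness half of Theorem~\ref{thm:main}.
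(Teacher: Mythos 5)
Your first step is correct and is a cleaner observation than anything stated in the paper: using the WSNE condition on $\ifrakp$ to compute $(R\ybf)_\ifrak-(R\ybf)_\ifrakp=\eps^*q+\frac{3}{8}\ybf_\jfrak+\frac{5}{8}\ybf_\jfrakp$ and conclude $\ybf_\jfrak=\ybf_\jfrakp=0$ is a valid and explicit argument. The paper's own proof does something different: it asserts that the arguments of Lemmas~\ref{lem:groupa-no} and~\ref{lem:groupc} carry over verbatim to $\gcalpp$, giving $\ybf_\jfrak>1-\frac{\eps^*}{1-\eps^*}$, and then concludes $(\ifrak,\jfrak)$ is the unique WSNE.

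The contradiction step, however, has a genuine gap, and you correctly suspect where it is. In the transport to $\gcalp$ the row side is fine (the shift raises the row payoff and leaves the best-response value $\frac{5}{8}+\eps^*$ unchanged), but the column side does not go through. Shifting $\xbf_\ifrakp$ onto $\ifrak$ decreases every $\gcal_s$-column payoff by $\frac{5}{8}\xbf_\ifrakp$ while increasing the $\jfrak$ payoff by $\frac{3}{8}\xbf_\ifrakp$; these do not cancel, and the column player's regret relative to the deviation $\jfrak$ increases by exactly $\frac{5}{8}\xbf_\ifrakp+\frac{3}{8}\xbf_\ifrakp=\xbf_\ifrakp$. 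Since the WSNE condition in $\gcalpp$ only bounds that regret by $\eps^*$, after the shift you only have a bound of $\eps^*+\xbf_\ifrakp$, which is strictly larger than $\eps^*$, so $(\xbf',\ybf)$ is not necessarily an $\eps^*$-NE of $\gcalp$ and Lemma~\ref{lem:groupa-no} cannot be invoked. The backup fails for the same structural reason: restricting $\xbf$ to $\gcal_s$-rows and renormalising by $r=1-\xbf_\ifrak-\xbf_\ifrakp<1$ multiplies the column regret by $1/r$, giving only the bound $\eps^*/r>\eps^*$, so again you do not obtain an $\eps^*$-NE of $\gcal_s$ and the soundness half of Theorem~\ref{thm:main} does not apply.

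It is worth flagging that this gap is not a mere bookkeeping nuisance. Consider the profile where $\xbf$ puts probability $\alpha\le\eps^*$ on $\ifrak$ and $1-\alpha$ on $\ifrakp$, and $\ybf$ is uniform over the $(y,b)$ columns of $\gcal_s$. One can check that $(R\ybf)_\ifrak=\frac{5}{8}+\eps^*$, $(R\ybf)_\ifrakp=\frac{5}{8}$, every $\gcal_s$-row has payoff at most $\frac{5}{8}+\eps^*$ against uniform $\ybf$, and every column in $\supp(\ybf)$ has regret exactly $\alpha\le\eps^*$; so this is an $\eps^*$-WSNE of $\gcalpp$ with $\ifrakp\in\supp(\xbf)$. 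This suggests that neither your transport nor the paper's ``same arguments as Lemma~\ref{lem:groupa-no}'' assertion can be made to work as stated, and that any fix will have to change either the payoff parameters in $\gcalpp$ or the target statement. In short: your step one is right, but the contradiction cannot be obtained by the route you sketch, and the obstruction appears to be real rather than an artifact of your bookkeeping.
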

\begin{proof}
We prove that the unique $\eps^*$-WSNE of \gcalpp is the pure profile~$(\ifrak, \jfrak)$.
Using exactly the same arguments as in the proof of Lemma~\ref{lem:groupa-no} we can 
prove that if $\omega(\fcal)<1-\delta^*$, then in any $\eps^*$-NE of \gcalpp it holds 
that $\xbfi > 1- \frac{\eps^*}{1-\eps^*}$ and $\ybfj > 1- \frac{\eps^*}{1-\eps^*}$.
Then, using exactly the same arguments as in Lemma~\ref{lem:groupc} we can get 
that the pure strategy \jfrak for the column player yields payoff at least 
$1- \frac{\eps^*}{1-\eps^*}$ while any other pure strategy, including \jfrakp, 
yields payoff at most $\frac{\eps^*}{1-\eps^*}$. Hence, in any $\eps^*$-WSNE of
\gcalpp the column player must play only the pure strategy \jfrak. Then, in order to
be in an $\eps^*$-WSNE the row player must play the pure strategy~\ifrak.
Our claim follows.
\qed
\end{proof}
 
The combination of Lemmas~\ref{lem:ipinsup} and~\ref{lem:ipninsup} gives the following theorem.
\begin{theorem}
\label{thm:main2}
Assuming the ETH, any algorithm that solves the Problem~\ref{probc:restricting} 
for any constant $\eps < \frac{1}{8}$ 
requires $n^{\mathrm{\tilde{\Omega}}(\log n)}$ time.
\end{theorem}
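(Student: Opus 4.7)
The plan is to reduce \freegame to Problem~\ref{probc:restricting} via the same pipeline already established in this section. Starting from the game $\gcalp$ of Figure~\ref{fig:gcalp}, whose equilibrium structure is determined by whether $\omega(\fcal)=1$ or $\omega(\fcal)<1-\delta^*$, I would adjoin one additional row $\ifrakp$ and one additional column $\jfrakp$ with the payoffs of Figure~\ref{fig:gcalpp} to obtain $\gcalpp$, and fix the restricted set to $S=\{\ifrakp\}$. The construction is polynomial in $|\fcal|$, so a sub-quasi-polynomial algorithm for Problem~\ref{probc:restricting} on $\gcalpp$ would, through Theorem~\ref{thm:main} and Theorem~\ref{thm:qplower}, contradict ETH. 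What remains is to establish two lemmas: completeness (Lemma~\ref{lem:ipinsup}) and soundness (Lemma~\ref{lem:ipninsup}).

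For Lemma~\ref{lem:ipinsup}, I would start from the $\eps^*$-WSNE $(\xbf',\ybf')$ of $\gcalp$ produced by Lemma~\ref{lem:groupa-yes}: $\xbf'$ is uniform on a set of size $|X|$, both players earn $\frac{5}{8}$, and $\supp(\xbf')\subseteq S_R$. I would extend this to $\gcalpp$ by taking $\xbf$ to be uniform on $\supp(\xbf')\cup\{\ifrakp\}$ and keeping $\ybf=\ybf'$ with zero mass on $\jfrakp$. Because $\ybf$ is unchanged on $\gcalp$-columns and has no mass on $\jfrakp$, every row in $\supp(\xbf')\cup\{\ifrakp\}$ still earns exactly $\frac{5}{8}$ against $\ybf$, whereas the best pure response $\ifrak$ earns $\frac{5}{8}+\eps^*$, giving pure-strategy regret exactly $\eps^*$. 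For the column player, every column in $\supp(\ybf')$ continues to pay $\frac{5}{8}$ against $\xbf$ because $\ifrakp$ pays $\frac{5}{8}$ on the column side as well; the new column $\jfrakp$ earns only $\frac{|X|}{|X|+1}\cdot\frac{5}{8}<\frac{5}{8}$; and $\jfrak$ earns at most $\frac{5}{8}+\frac{|X|}{|X|+1}\eps^*<\frac{5}{8}+\eps^*$. Thus $(\xbf,\ybf)$ is an $\eps^*$-WSNE of $\gcalpp$ with $\ifrakp\in\supp(\xbf)$.

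For Lemma~\ref{lem:ipninsup}, I would aim to show that the unique $\eps^*$-WSNE of $\gcalpp$ is the pure profile $(\ifrak,\jfrak)$, so $\ifrakp$ never appears in the row support. The first step is to replay the computation of Lemma~\ref{lem:groupa-no} inside $\gcalpp$: decomposing $\xbf^{T}P\ybf$ over the partition $\{\gcal_s\text{-rows},\ifrak,\ifrakp\}\times\{\gcal_s\text{-cols},\jfrak,\jfrakp\}$ and comparing against the deviation to $\ifrak$. The new entries involving $\ifrakp$ or $\jfrakp$ all pay at most $\frac{5}{8}$ to the row player, so the same quadratic inequality as in Lemma~\ref{lem:groupa-no} survives and forces $\xbf_\ifrak,\ybf_\jfrak>1-\frac{\eps^*}{1-\eps^*}$ in every $\eps^*$-NE. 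Given this, against such a $\ybf$ the pure payoff of $\ifrak$ is at least $\ybf_\jfrak>1-\frac{\eps^*}{1-\eps^*}$, whereas any other row $r$ earns at most $\frac{5}{8}(1-\ybf_\jfrak)+\frac{5}{8}\ybf_\jfrakp\le\frac{5}{8}$, since $\jfrak$ pays $0$ to every such row and $\jfrakp$ pays only $\frac{5}{8}$. For $\eps^*<\frac{1}{8}$ the gap $1-\frac{\eps^*}{1-\eps^*}-\frac{5}{8}$ exceeds $\eps^*$, so no row other than $\ifrak$ can lie in $\supp(\xbf)$ of an $\eps^*$-WSNE, and in particular $\ifrakp\notin\supp(\xbf)$; the symmetric argument for the column player then pins down $(\ifrak,\jfrak)$ as the unique $\eps^*$-WSNE.

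The main obstacle is the first step of the soundness argument: I must verify explicitly that the additional cross-terms in $\xbf^{T}P\ybf$ coming from $\ifrakp$ and $\jfrakp$, which are absent from Lemma~\ref{lem:groupa-no}, do not weaken the derived inequality enough to lose the bound $\xbf_\ifrak,\ybf_\jfrak>1-\frac{\eps^*}{1-\eps^*}$. Because every new cross-term pays at most $\frac{5}{8}$, which is dominated by the $(\ifrak,\jfrak)$-payoff of $1$ used in the original derivation, the same manipulations should go through almost verbatim; but this algebra must be carried out carefully for the enlarged game before the pure-strategy calculation of Lemma~\ref{lem:groupc} can be imported to rule out $\ifrakp$ from the row support.
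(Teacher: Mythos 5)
Your proposal takes essentially the same route as the paper: construct $\gcalpp$ by adjoining $\ifrakp$ and $\jfrakp$, fix $S=\{\ifrakp\}$, and prove completeness and soundness exactly as in the paper's Lemmas~\ref{lem:ipinsup} and~\ref{lem:ipninsup}. Your completeness argument is identical to the paper's. Your soundness argument follows the same skeleton (replay Lemma~\ref{lem:groupa-no} in $\gcalpp$ to force $\xbf_\ifrak,\ybf_\jfrak>1-\frac{\eps^*}{1-\eps^*}$, then rule out all other supports), and your closing payoff bound of ``$\le\frac{5}{8}$'' for rows other than $\ifrak$ is actually \emph{more} defensible than the paper's, which loosely claims that $\jfrakp$ yields at most $\frac{\eps^*}{1-\eps^*}$ to the column player; that claim is false (since $\jfrakp$ pays $\frac{5}{8}$ against every $\gcalp$-row including $\ifrak$), whereas the $\frac{5}{8}$ bound you use is correct and still leaves a gap exceeding $\eps^*$ to $\jfrak$'s payoff.

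Two small cautions. First, your justification ``$\jfrak$ pays $0$ to every such row'' is wrong for $r=\ifrakp$: the entry $(\ifrakp,\jfrak)$ pays $\frac{5}{8}$ to the row player. The conclusion $P_{\ifrakp}\ybf=(1-\ybf_\jfrakp)\frac{5}{8}\le\frac{5}{8}$ still holds, so the bound survives, but the stated reason does not cover that row. Second, you correctly flag as the main obstacle that Lemma~\ref{lem:groupa-no}'s decomposition acquires new cross-terms in $\gcalpp$. The observation that the new entries pay at most $\frac{5}{8}$ is necessary but not quite sufficient on its own: the cross-term $(\ifrakp,\jfrak)$ now pays $\frac{5}{8}$ rather than $0$, so the term $(1-\xbf_\ifrak)\ybf_\jfrak\cdot 0$ from the original derivation becomes $\xbf_\ifrakp\ybf_\jfrak\cdot\frac{5}{8}$, which weakens the inequality by an amount proportional to $\xbf_\ifrakp\ybf_\jfrak$ and must be absorbed explicitly. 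The paper dismisses this by writing ``using exactly the same arguments,'' so you are not at a lower level of rigor than the source, but your own caveat (``this algebra must be carried out carefully'') is well placed and should be heeded before declaring the bound proved.
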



\paragraph{\bf Acknowledgements} We would like to thank Aviad Rubinstein for
alerting us to the existence of Theorem~\ref{thm:qplower}.


\newpage

\section*{\refname}
\bibliography{references}

\end{document}
